\documentclass[acmsmall,screen,authorversion,nonacm]{acmart}
\AtBeginDocument{%
  }

\setcopyright{acmlicensed}
\copyrightyear{2026}
\acmYear{2026}
\acmDOI{XXXXXXX.XXXXXXX}

\makeatletter
\let\Bbbk\@undefined
\makeatother
\usepackage{my-paper-preamble}
\usepackage{paracol}
\usepackage{enumitem}
\usepackage{tikz-cd}
\allowdisplaybreaks

\newlist{enumeratealigned}{enumerate}{1}
\setlist[enumeratealigned]{
  label=(\arabic*).,
  leftmargin=*,
  topsep=0.4\baselineskip,
  before=\leavevmode  
}

\newcommand{\FV}{\operatorname{FV}}

\newcommand{\Ch}{\operatorname{Ch}}
\newcommand{\Var}{{\textit{Var}}}
\newcommand{\Chan}{{\textit{Chan}}}
\newcommand{\env}{{\text{env}}}
\usepackage{xcolor}
\usepackage{tikz-cd}
\usepackage{etoolbox}
\usepackage{mathtools}

\newcommand{\interact}{\mathcal{I}}

\newcommand{\CX}{{\text{CX}}}
\newcommand{\CH}{\operatorname{CH}}
\newcommand{\fin}{{\text{fin}}}

\newcommand{\st}[2]{\langle #1, #2 \rangle}

\newcommand{\sstep}[2][]{\mathrel{{\overset{#2}{\mapsto}}\vphantom{\mapsto}^{#1}}}
\newcommand{\Sstep}[1]{\mathrel{{\overset{#1}{\Mapsto}}}}

\newcommand{\parcomp}{\parallel}

\newcommand{\Obs}{\mathcal{O}}
\newcommand{\Obsvs}{\mathbb{O}}
\newcommand{\Domain}{\Delta}

\newcommand{\supp}{\operatorname{supp}}
\newcommand{\density}{\operatorname{density}}

\newcommand{\Proc}{\textit{Proc}}
\newcommand{\State}{\textit{State}}
\newcommand{\Trace}{\textit{Trace}}

\newcommand{\recons}{\mathcal{R}}
\newcommand{\partition}{\pi}

\newcommand{\obseq}{\approx_{\text{obs}}}
\newcommand{\obssim}[1]{\sim^{#1}_{\text{obs}}}

\newcommand{\sink}{\operatorname{sink}}
\newcommand{\subsim}{\sim}
\newcommand{\subeq}{\approx}

\newcommand{\Bell}{\operatorname{Bell}}

\newcommand{\concat}{\mathbin{+\mkern-5mu+}}

\usetikzlibrary{calc}

\newif{\ifproofdetails}
\proofdetailsfalse

\begin{document}
\title{Distributed Semantics for Distributed Quantum Computing}

\author{Jun Inoue}
\email{jun.inoue@aist.go.jp}
\orcid{0000-0002-2939-8337}
\affiliation{%
  \institution{National Institute of Advanced Industrial Science and Technology}
  \city{Ikeda}
  \state{Osaka}
  \country{Japan}
}

\begin{abstract}
  We present a quantum process calculus that can split the system 
  state along process boundaries and follow the evolution of each 
  process in isolation, without losing information about the joint 
  state---a property we call spatial compositionality.
  Compositionality is the key to reasoning about any complex system, 
  yet quantum process calculi have struggled to provide its spatial 
  kind, which would enable analyzing a system one process at a time.
  Many a quantum process calculi have been proposed, but they 
  invariably rely on a global state representation based on state 
  vectors or density matrices, with no known way to split them 
  without losing information about entanglement.

  We propose to model quantum states with Deutsch-Hayden descriptors 
  instead, which provide a modular representation of qubit states and 
  their evolution.
  We adapt these descriptors to allow arbitrary splitting and merging 
  of the store of qubits, leading to an unusual process calculus in 
  which qubit transfer messages carry the actual state of the qubit, 
  where existing calculi transfer only a reference.
  The calculus gives localized views of system state visible to each 
  process, which can be assembled back together into the joint state.
  We define a notion of process equivalence with extensive 
  justification grounded in physics and show a bisimulation whose 
  soundness proof is simplified by spatial compositionality.
  The calculus can model open systems entangled with external 
  processes, and we demonstrate this capability on a fragment of the 
  BB84 key distribution protocol.
  This exercise shows that Deutsch-Hayden descriptors can 
  successfully track qubit movements across process and system 
  boundaries, though it needs help from density matrices to reason 
  about information flow.
\end{abstract}

\begin{CCSXML}
<ccs2012>
   <concept>
       <concept_id>10003752.10003753.10003758.10010626</concept_id>
       <concept_desc>Theory of computation~Quantum information theory</concept_desc>
       <concept_significance>500</concept_significance>
       </concept>
   <concept>
       <concept_id>10003752.10003753.10003761.10003764</concept_id>
       <concept_desc>Theory of computation~Process calculi</concept_desc>
       <concept_significance>500</concept_significance>
       </concept>
   <concept>
       <concept_id>10011007.10011006.10011039.10011311</concept_id>
       <concept_desc>Software and its engineering~Semantics</concept_desc>
       <concept_significance>300</concept_significance>
       </concept>
   <concept>
       <concept_id>10011007.10010940.10010992.10010998.10010999</concept_id>
       <concept_desc>Software and its engineering~Software verification</concept_desc>
       <concept_significance>100</concept_significance>
       </concept>
   <concept>
       <concept_id>10003033.10003039.10003041.10003043</concept_id>
       <concept_desc>Networks~Formal specifications</concept_desc>
       <concept_significance>300</concept_significance>
       </concept>
 </ccs2012>
\end{CCSXML}

\ccsdesc[500]{Theory of computation~Quantum information theory}
\ccsdesc[500]{Theory of computation~Process calculi}
\ccsdesc[300]{Software and its engineering~Semantics}
\ccsdesc[100]{Software and its engineering~Software verification}
\ccsdesc[300]{Networks~Formal specifications}

\keywords{Quantum Communication, Quantum State Representation, Process Compositionality}

\received{9 July 2026}

\maketitle

\section{Introduction}
\label{sec:intro}

In this article, we present a quantum process calculus whose 
semantics can follow the evolution of each process in isolation, in 
such a way that the results can be uniquely assembled back into the 
evolved joint system state, regardless of entanglement.
We call this property \emph{spatial compositionality}.

An important application of quantum computation is in communication 
protocols, such as cryptographic key distribution.
These protocols are distributed quantum algorithms, which have a 
natural home in quantum process calculi.
A number of process calculi have been proposed against this backdrop, 
such as QPAlg \cite{LalireJorrand2004ProcessAlgebraicApproach}, CQP 
\cite{GayNagarajan2005CommunicatingQuantumProcesses}, qCCS 
\cite{FengEtAl2007ProbabilisticBisimulationsQuantum}, and lqCCS 
\cite{CeragioliEtAl2024QuantumBisimilarityBarbs}.

All of these calculi lump the quantum states held by all processes 
into a single, monolithic state.
Formally, they define a labeled transition system whose nodes are 
configurations that look like $\st{P}{\rho}$ (plus bookkeeping data), 
where $\rho$ is a state vector or density matrix spanning not just 
the qubits held by process $P$ but all other processes that may be 
running in parallel.
But those representations are not modular, giving no lossless way to 
split the system state into process-local views.
Partial trace, the only generally accepted tool for narrowing the 
view of a density matrix, produces local views that collectively 
cannot recover the global state---it loses information.

Writ large, a monolithic semantics cannot model fully open system 
that interact and entangle with the outside world.
For example, qCCS has a construct for external input, but it requires 
that the incoming qubit is unentangled with the local ones.
Writ small, a monolithic semantics cannot analyze the system's 
processes individually, because the effects of operation in one 
process ripples through the global quantum state.

Monolithic semantics makes it awkward to justify a style of reasoning 
that comes most naturally in classical distributed systems: to break 
them apart into local processes, to reason about them and their 
interactions locally, and infer the behavior of the whole.
Modularity is the cornerstone of reasoning about complex systems, but 
to our knowledge, this feature has so far been denied in quantum 
computing.
A spatially compositional semantics lays much needed groundwork for 
modular reasoning by prescribing the rules for dealing with fragments 
of entangled systems.

The idea is to stop modeling quantum state by density matrices and 
instead use Deutsch-Hayden (DH) descriptors 
\cite{DeutschHayden2000InformationFlowEntangled, 
  Bedard2021ABCDeutschHayden}, which do away with the global state 
vector and tracks modifications locally to each qubit.
That is, an operation on qubit $x$ does not affect the DH descriptor 
of qubit $y$, even when $x$ and $y$ are entangled.
Crucially, unlike partial traces, the individual descriptors for $x$ 
and $y$ suffice to reconstruct the state of the joint $x$-$y$ system.
We adapt this formalism to describe a quantum store that can be split 
apart, regardless of entanglement status.
This state representation naturally factors the semantics of 
distributed quantum systems into process-local views, allowing us to 
reason about individual processes in isolation.

This paper's contributions are as follows.
\begin{itemize}
\item After reviewing DH descriptors (\cref{sec:dh}), we adapt them 
  to allow arbitrary splitting and merging along spatial boundaries 
  (\cref{sec:splittable-dh}).
  The key modification is to move from indexing qubits to naming 
  them, providing stable identifiers that withstand splitting and 
  merging.
\item We present DH-CCS, a quantum variant of CCS representing 
  quantum states by DH descriptors (\cref{sec:dh-ccs}).
  Its unique feature is that a message $c!q$ carries not just a 
  qubit's name but its whole state, constructed as a single-qubit 
  fragment of the sending process' store.
\item We explain how DH-CCS unifies allocation, discarding, 
  measurement, and communication into a single common action: 
  transfer of qubits across the system boundary 
  (\cref{sec:unification}).
  Unfortunately, classical control by measurement results fall 
  outside of this unification.
  DH-CCS leaves out classical control, though we explain how this is 
  not a grave limitation.
\item We prove that DH-CCS is spatially compositional 
  (\cref{sec:metatheory}), namely that the result of independently 
  evolving processes $P$ and $Q$ can predict the result of evolving 
  them together in a parallel composition $P\parcomp Q$.
  We follow up with an analysis of why density matrices do not seem 
  to support such reasoning, showing what makes DH descriptors 
  different---namely, descriptors encode bounded information about 
  evolution history.
\item We define an observer equivalence, accompanied with a careful 
  derivation of requirements vetted against what is allowable by the 
  laws of physics (\cref{sec:bisimulation}).
  Comparing current measurement probabilities is insufficient because 
  DH descriptors capture the structure of entanglement with the 
  outside world.
  Differences in that structure only manifest through interactions 
  over time, so continuous monitoring is needed to tell them apart.
  We also show a bisimilarity whose soundness proof is simplified by 
  spatial compositionality.
\item We illustrate how DH-CCS can be used to model and reason about 
  an open system, taking a fragment of the BB84 quantum key 
  distribution protocol 
  \cite{BennettBrassard2014QuantumCryptographyPublic} as an example 
  (\cref{sec:example}).
  This example showcases what DH descriptors are and are not good 
  for, given current understanding of them.
  Briefly: they are good for monitoring the dynamics of qubit 
  movements across process and system boundaries but not for accurate 
  accounting of information flow.
\end{itemize}

\section{Background: Deutsch-Haynes Descriptors}
\label{sec:dh}

In this section, we review DH descriptors.
What follows is mostly a condensed summary of the excellent 
introduction by B\'{e}dard \cite{Bedard2021ABCDeutschHayden}, with 
some materials from Horsman and Vedral 
\cite{HorsmanVedral2007DevelopingDeutschHayden}.

A conventional description of quantum computing tracks the evolution 
of state vectors, usually initialized to $\ket{0^n}$.
One asks how the state vector changes under gate 
applications---represented by a unitary $U$---and then ask what a 
measurement of that evolved state produces.
A popular way to represent measurements is the observable, an 
Hermitian operator $\Obs$ which, when sandwiched like 
$\bra{0^n}U^\dagger\Obs U\ket{0^n}$, produces an expectation value 
for the evolved state $U\ket{0^n}$.

DH descriptors, by contrast, evolves the observables, tracking ``the 
observable that represents $U$ followed by measurement by $\Obs$'' as 
a description of the system state.
The evolved observable is $U^\dagger \Obs U$, which when sandwiched 
by the initial state vector $\ket{0^n}$, produces the same 
$\bra{0^n}U^\dagger \Obs U\ket{0^n}$.

There are infinitely many observables, but it suffices to track the 
evolution of just two observables per qubit, the Pauli $X$ and $Z$.
This is because the Pauli group 
$\bigl\{\bigotimes_{j=1}^n P_j \mid \forall j.\ P_j \in 
\{1,X,Y,Z\}\bigr\}$ is a basis for $\mathbb{C}^{2^n\times 2^n}$, so 
any matrix can be expressed as a linear combination over this group, 
and the $Y$'s can be eliminated by $Y = iXZ$.
Thus, there is an expansion of any $\Obs$ in a multilinear 
combination of $X_j := 1^{j-1}\otimes X\otimes 1^{n-j}$ and 
$Z_j := 1^{j-1}\otimes Z\otimes 1^{n-j}$, respectively Pauli $X$ and 
$Z$ targeting the $j$-th qubit.
We may capture this expansion in a function $f_\Obs$ as
\begin{equation}\label{eq:fU-def}
  \Obs = f_\Obs ((X_1,Z_1), \ldots, (X_n, Z_n)).
\end{equation}
This is called the \emph{functional expansion} of $\Obs$.
Unitary conjugation distributes over ring operations, so
\begin{equation}\label{eq:fU-conjugate}
  U^\dagger \Obs U = f_\Obs ((U^\dagger X_1U, U^\dagger Z_1U), \ldots, 
  (U^\dagger X_nU, U^\dagger Z_nU))
\end{equation}
gives a formula to recover the evolved form of any $\Obs$ from the 
evolved forms of the $X_j$'s and $Z_j$'s.

A \emph{DH descriptor set} for an $n$-qubit system is an array of 
observables giving the evolved forms of $X_j$'s and $Z_j$'s.
Specifically, it is a mapping 
$q : \{1,\ldots,n\} \rightarrow \Obsvs_n^2$ where $\Obsvs_n$ is the 
set of observables (i.e.\ Hermitian matrices) on the space of $n$ 
qubits $\mathbb{C}^{2^n}$.
Each descriptor is initialized to $q_0(j):= (X_j, Z_j)$, and evolving 
it by $U$ turns it into $q = U^\dagger q_0 U$ where matrix 
multiplication is understood to distribute over the array of matrices 
$q_0$.
The pair that makes up a single-qubit descriptor will be indexed by 
formal symbols $X$ and $Z$, so for example $q_0(j)_X = X_j$ and 
$q_0(j)_Z = Z_j$ while in general $q(j) = (q(j)_X, q(j)_Z)$.
A DH descriptor set can be seen as a set of arguments to the $f_\Obs$ 
function, so \cref{eq:fU-def} can be written more concisely as 
$\Obs = f_\Obs (q_0)$, and \cref{eq:fU-conjugate} as 
$U^\dagger \Obs U = f_\Obs (U^\dagger q_0 U) = f_\Obs (q)$.

The direct evolution formula $q = U^\dagger q_0 U$ is problematic for 
incremental applications.
Applying unitary gates $U_1, U_2, \ldots, U_k$ in that order gives 
$U_1^\dagger U_2^\dagger \cdots U_k^\dagger q_0U_k \cdots U_2 U_1$, 
which can only be directly calculated by applying $U_k$ first and 
$U_1$ last.
The solution is to reuse $f_U$, which is in fact defined for all 
matrices $U$ (not just observables or unitaries) because the Pauli 
group spans all of $\mathbb{C}^{2^n\times 2^n}$.
When a modified update formula $q \mapsto f_U(q)^\dagger q\, f_U(q)$ 
is iterated for $U_1, \ldots, U_k$ in that desired order:
\begin{align}
  q_1 &= f_{U_1}(q_0)^\dagger q_0\, f_{U_1}(q_0) \notag \\
  q_2 &= f_{U_1}(q_1)^\dagger f_{U_2}(q_1)^\dagger q_0\, f_{U_1}(q_0)f_{U_2}(q_1) \notag \\
  q_3 &= f_{U_3}(q_3)^\dagger f_{U_2}(q_2)^\dagger f_{U_1}(q_1)^\dagger
  q_0\,f_{U_1}(q_1) f_{U_2}(q_2)f_3(q_3) \text{ etc.}, \notag
\end{align}
the result works out \cite{Bedard2021ABCDeutschHayden} to be
\begin{equation}\label{eq:incremental-update}
  q_k = U_1^\dagger U_2^\dagger \cdots U_k q_0 U_k \cdots U_2 U_1.
\end{equation}

It's important to note that DH descriptor evolution \emph{must} start 
with $q_0$.
This ensures a critical invariant necessary for the modified update 
formula to work out as \eqref{eq:incremental-update}.

\begin{definition}\label{defn:su2-algebra}
  The \emph{Pauli equations} on a partial DH descriptor set 
  $q : \{1..n\} \rightarrow \Obsvs_N^2$ ($n \leq N$) consist of the 
  following equations for all $j$ and $k \neq j$ in $\dom q$:
  \begin{itemize}
  \item 
    $q(j)_X^\dagger = q(j)_X \land 
    q(j)_Z^\dagger = q(j)_Z$ \hfill (descriptors are Hermitian)
  \item $q(j)_X^2 = q(j)_Z^2 = 1$ \hfill (descriptors are 
    involutions)
  \item $q(j)_X q(j)_Z = - q(j)_Z q(j)_X$ 
    \hfill (distinct same-qubit descriptors anti-commute)
  \item 
    $\forall \ell,m \in \{X,Z\}.\ q(j)_\ell q(k)_m 
    = q(k)_m q(j)_\ell$ \hfill (different-qubit descriptors 
    commute).
  \end{itemize}
\end{definition}

These equations characterize the valid DH descriptor sets.
The initial set $q_0$ satisfies them, and unitary update 
$q \mapsto U^\dagger q U$ preserves them.
We'll see later that the converse also holds: any set satisfying the 
equations has the form $U^\dagger q_0 U$, letting us complete partial 
frames with missing qubits.

The most salient fact about DH descriptors is that they are 
\emph{modular}: i.e.\ applying a unitary affects only the descriptors 
of the qubits that the unitary directly touches.
Other qubits' descriptors are unmodified, whether or not they are 
entangled with the target\footnote{Some unitary gates like $\CX$ 
  distinguish between control qubits and target qubits, but in this 
  article, we use ``target'' as a shorthand for all qubits that are 
  nontrivially involved with the unitary.}
qubits.

\begin{fact}[modularity]\label{thm:dh-modularity}
  Suppose $q$ is a set of DH descriptors of the form 
  $q = U^\dagger q_0 U$ with $U$ a unitary.
  If $V$ is another unitary that targets only some of the qubits, and 
  $q'$ is the result of evolving $q$ by $V$, then $q$ and $q'$ agree 
  on the non-target qubits.
  Formally, (assuming without loss of generality that $V$ targets the 
  first $k$ out of $n$ qubits) if $V = W\otimes 1^{n-k}$ for some 
  unitary $W$ then $q'(j) = q(j)$ for $j > k$.
  \begin{proof}
    Because $W$ is an operator in the subspace of the first $k$ 
    qubits, it can be written as a multilinear combination of $X_j$'s 
    and $Z_j$'s for $j \leq k$.
    This means $f_W(q)$ picks up only those $q(j)$ with $j \leq k$.
    By the Pauli equations, they commute with $q(j)$ for $j > k$, so 
    noting $f_V(q) = f_W(q) \otimes 1^{n-k}$, we have 
    $q'(j) = f_V(q)^\dagger q(j)\, f_V(q) = f_V(q)^\dagger f_V(q) 
    q(j) = q(j)$ for $j > k$.
  \end{proof}
\end{fact}

\begin{example}
  Let us consider the creation of a Bell state 
  $\frac{1}{\sqrt{2}}(\ket{00} + \ket{11})$, followed by Hamadard on 
  qubit 1 to see its effects.
  The initial DH descriptors, encoding a $\ket{00}$ state, are
  \begin{mathpar}
    q_0 = [q_0(1), q_0(2)] = [(X_1, Z_1), (X_2, Z_2)].
  \end{mathpar}
  In preparation of the Bell state, we apply Hadamard to qubit 1.
  Note $H_1 = (X_1+Z_1)/\sqrt{2}$, hence 
  $f_{H_1}(q) = (q(1)_1 + q(1)_2)/\sqrt{2}$.
  Then, calculation gives the encoding of $\ket{+0}$ as
  \begin{mathpar}
    q_1 = f_{H_1}(q_0)^\dagger q_0\,f_{H_1}(q_0) = [(Z_1, X_1), 
    (X_2, Z_2)].
  \end{mathpar}
  Then applying 
  $\CX_{1\rightarrow 2} = \frac{1}{2}(1\otimes1 + Z_1 + X_2 - Z_1 
  X_2)$ gives the encoding of 
  $\frac{1}{\sqrt{2}}(\ket{00} + \ket{11})$ as:
  \begin{mathpar}
    q_2 = f_{\CX}(q_1)^\dagger q_1 f_{\CX}(q_1) = [(Z_1 X_2, X_1), 
    (X_2, X_1 Z_2)].
  \end{mathpar}
  Notice how $X_2$, originally a descriptor for qubit 2, now lives in 
  the descriptor for qubit 1.
  This is how DH descriptors represent entanglement: Paulis of 
  entangled qubits show up in the local descriptor.
  Let us now apply another Hadamard on qubit 1.
  The result works out to be:
  \begin{mathpar}
    q_3 = f_{H_1}(q_2)^\dagger q_2 f_{H_1}(q_2) = [(X_1, Z_1 X_2), 
    (X_2, X_1 Z_2)].
  \end{mathpar}
  The descriptor of qubit 2 has not changed at all, despite 
  entanglement between the qubits.
\end{example}

Let us show how to extract the corresponding density matrix from a 
set of DH descriptors, including a handling of discarding.
B\'{e}dard \cite{Bedard2021ABCDeutschHayden} does not work out this 
extraction, but related discussions are offered by Horsman and Vedral 
\cite{HorsmanVedral2007DevelopingDeutschHayden}.

Recall that the density matrix of a state $\ket{\psi}$ (pure or 
mixed) is the expectation-weighted sum 
$\sum_{\Obs} \langle \psi|\Obs|\psi\rangle \Obs$ with $\Obs$ running 
over any ONB of observables with respect to the Hilbert-Schmidt inner 
product.
The Pauli group is one such ONB.
Letting $S := \{1,X,Y,Z\}$, each string $s \in S^n$ specifies a Pauli 
group element $P_s := \prod_{j=0}^{n-1} (s_j)_j$ where $(s_j)_j$ means 
the matrix indicated by $s_j$ (i.e.\ the $j$-th symbol in the string 
$s$) acting on qubit $j$; for example, 
$P_{X1ZY} = X_1 Z_3 Y_4 (= X\otimes 1\otimes Z\otimes Y)$.
Given $q$ obtained by evolving $q_0$ by a unitary $U$ (i.e.\ 
$q = U^\dagger q_0 U$), extend each $q(j)$ by 
$q(j)_Y := i q(j)_X q(j)_Z$ and $q(j)_1 := 1$, where $i$ is the 
imaginary unit.
Then the density matrix of $U\ket{0^n}$, the state encoded in $q$, 
is given by the expectation-weighted sum formula as:
\begin{equation}\label{eq:dh-density-whole}
  \density q = \sum_{s \in S^n}\bra{0^n}U^\dagger P_s U\ket{0^n} P_s 
  = \sum_{s \in S^n}\left\langle0^n \middle| \prod_{j=1}^n q(j)_{s_j} \middle| 0^n \right\rangle P_s.
\end{equation}

There are several $n$'s in this formula, but they are not the same 
$n$: the $n$ on top of $\prod$ and the one in $S^n$ are $\#\dom q$, 
the number of qubits that $q$ prescribes descriptors for, but the 
other $n$'s are the number of qubits that each descriptor acts on, 
i.e.\ the number of qubits that are entangled with these descriptors.
This difference becomes consequential when we restrict $\dom q$ 
without restricting the qubits the descriptors act on.
\begin{equation}\label{eq:dh-density}
  \density (q|_{\{1..k\}}) = 
  \sum_{s \in S^{k}}\left\langle0^n \middle| \prod_{j=1}^{k} q(j)_{s_j} 
    \middle| 0^n \right\rangle P_s.
\end{equation}

This formula is exactly what we get if we discard (i.e.\ trace out) 
all but the first $k$ qubits from the whole-system density matrix.
Taking a partial trace of \cref{eq:dh-density-whole} gives
\begin{equation*}
  \tr_{k+1..n} (\density q) 
  = \sum_{s \in S^{k}, s' \in S^{n-k}}\left\langle0^n \middle| \prod_{j=1}^{k} q(j)_{s_j} 
    \cdot \prod_{j=1}^{n-k} q(j)_{s'_j} 
    \middle| 0^n \right\rangle P_s \cdot \tr P_{s'},
\end{equation*}
but $X,Y,Z$ are traceless, so the only terms that survive are those 
with $s' = 1^{n-k}$.
This formula then reduces to exactly the right-hand side of 
\cref{eq:dh-density}, hence 
$\tr_{k+1..n}(\density q) = \density(q_{\{1..k\}})$.
Thus, the density matrix of a subset of qubits is just the density 
matrix of the subset of DH descriptors, where individual descriptors 
still act on the whole set of entangled qubits.

\section{Splittable, Mergeable Descriptors}
\label{sec:splittable-dh}

This section develops the mathematical machinery needed to turn DH 
descriptors into a spatially compositional representation.
The idea is to stop identifying qubits by position and to identify 
them by name instead, so that each qubit has a stable ID that stays 
meaningful when they are completely separated from each other.

DH descriptors as explained in the preceding section are not 
compositional because each individual descriptor $q(j)_X$ or $q(j)_Z$ 
is a \emph{global} operator acting on all qubits in the system.
This monolithic formulation is more or less forced by the choice of 
standard DH to identify qubits by their position on a tensor product.
Effectively, it is identifying qubits by numerical indices into the 
global store, which is incompatible with splitting and merging.
For instance, splitting should re-index qubits so that $\dom q$ 
remains of the form $\{1..n\}$; yet if each subset re-indexed 
independently, then a descriptor in subset $q_1$ may re-index to 
qubit 5 what subset $q_2$ re-indexed to qubit 7, confusing the 
representation of entanglement.

The solution is to identify qubits by names instead of 
context-dependent indices.
The idea is to redefine DH descriptor sets from 
$q : \{1..n\} \rightarrow \Obsvs_n^2$ to 
$q : V \rightarrow \Obsvs_W^2$, where $V,W$ are finite subsets of a 
fixed infinite supply of variables $\Var$, and $\Obsvs_W$ is the set 
of Hermitian operators on $\mathbb{C}^{2^W}$.
Effectively, we replace the $n$ in $\mathbb{C}^{2^n}$, the usual 
domain of $n$-qubit system states, with $V$.

How much does this replacement alter the linear algebra?
Not much.
A conventional vector $v \in \mathbb{C}^{2^n}$ is a list of numbers 
indexed by $n$-bit wide binary numbers 
$v = [v_0, v_1, \ldots, v_{2^n-1}]$.
A vector $v \in \mathbb{C}^{2^V}$ is a collection of numbers indexed 
by valuations on $V$, i.e.\ maps $V \rightarrow 2$.
If we fix an ordering on $V = \{x_0, x_1, \ldots, x_{n-1}\}$, then we 
can identify the valuation 
$[x_{n-1} \mapsto b_{n-1}, x_{n-2} \mapsto b_{n-2}, \ldots, x_{0} 
\mapsto b_0]$ with the bit string $b_{n-1}b_{n-2}\ldots b_0$ and the 
natural number it represents.
Then $v$ is once again $v = [v_0, v_1, \ldots, v_{2^{n-1}}]$, except 
each natural number $k \leq 2^{n-1}$ is understood as a valuation 
through the bit string encoding.

Likewise, matrices on $\mathbb{C}^{2^V}$ are just arrays of numbers 
$A_{\eta\theta}$, where the row and column indices are valuations 
$\eta,\theta : V \rightarrow 2$.
Hermitian conjugation and Hermiticity are defined as usual by 
$(A^\dagger)_{\eta\theta} = (A_{\theta\eta})^*$ and $A^\dagger = A$, 
respectively.
A Pauli $X$ on a single-qubit space $\mathbb{C}^{2^{\{x\}}}$ is just 
the matrix $X_x$ with the $([x \mapsto 0],[x \mapsto 0])$ and 
$([x \mapsto 1],[x \mapsto 1])$ entries 1, and the two other entries 
0.

The one thing that does change is that each qubit now has a natural 
identity, independent of what other qubits are at play.
There is a single, canonical interpretation for $X_x$, be it on 
$\mathbb{C}^{2^{\{x\}}}$, $\mathbb{C}^{2^{\{x,y,z\}}}$, or 
$\mathbb{C}^{2^V}$ for any $V$, as long as $x \in V$.
This independent identity makes re-indexing unnecessary.
For example, the store
\begin{equation*}
  \rho := [x \mapsto (Z_x X_y, X_x), y \mapsto (X_y, X_x Z_y)]
\end{equation*}
with two entangled qubits can be split up into
\begin{mathpar}
  \rho|_{\{x\}} = [x \mapsto (Z_x X_y, X_x)] \and \text{and} \and 
  \rho|_{\{y\}} = [y \mapsto (X_y, X_x Z_y)],
\end{mathpar}
and each piece makes perfect sense without the other.
Sub-store $\rho|_{\{x\}}$ knows that $x$ is a qubit entangled with 
\emph{some} qubit named $y$ in the indicated manner, and what state 
that $y$ is in is not $\rho|_{\{x\}}$'s concern.
Likewise, $\rho|_{\{y\}}$ knows $y$ is entangled with \emph{some} 
external qubit named $x$ somewhere out there.
Thanks to locality (\cref{thm:dh-modularity}), operations on qubit 
$x$ need only modify $\rho|_{\{x\}}$.
After independent evolution, the fragments can be put back together 
at any time by $\rho = \rho|_{\{x\}} \cup \rho|_{\{y\}}$.

One lingering issue is that, technically, $X_x$ is still a different 
matrix in $\Obsvs_V$ as it is in $\Obsvs_W$ (assuming $V \neq W$ but 
both contain $x$), so a reinterpretation is still needed when 
splitting or merging changes the variable set we are working with.
To avoid constant reinterpretation, we need a domain of 
representations that embeds $\Obsvs_V$ for all 
$V \subseteq_\fin \Var$, which we now construct.
For readers familiar with universal algebra, what follows is 
essentially a construction of the intended model for the algebra of 
ring expressions in $X_x$, $Z_x$ for all $x \in \Var$ modulo 
equalities that hold when the $X_x$'s and $Z_x$'s are interpreted in 
a finitary model $\mathbb{C}^{2^V\times 2^V}$.
The constructed model is a colimit of those finitary models along 
with evident embeddings 
$\mathbb{C}^{2^V\times 2^V} \hookrightarrow \mathbb{C}^{2^W\times 
  2^W}$ where $V \subseteq W$.

\begin{definition}
  For a set $S$, let 
  $\mathbb{C}^{(S)} := \bigoplus_{\_ \in S} \mathbb{C}$ i.e.\ the set 
  of all finite linear combinations of elements of $S$.
  Define its inner product by 
  $\langle u, v\rangle := \sum_{s \in S} u_s^* v_s$, which is a 
  finite sum and therefore always defined.
\end{definition}

\noindent Note that for finite $V$, we have 
$\mathbb{C}^{(2^{\Var})} \cong \mathbb{C}^{(2^V \times 2^{\Var-V})} 
\cong \mathbb{C}^{(2^V)} \otimes \mathbb{C}^{(2^{\Var-V})} = 
\mathbb{C}^{2^V} \otimes \mathbb{C}^{(2^{\Var-V})}$.

\begin{definition}
  For $V \subseteq_\fin \Var$, define $\iota_V$ to be the map taking 
  $A \in \mathbb{C}^{2^V\times 2^V}$ to the linear endomorphism 
  $A\otimes 1_{\mathbb{C}^{\left(2^{\Var-V}\right)}}$ on 
  $\mathbb{C}^{(2^\Var)}$.
  Let $\Domain := \bigcup_V \img \iota_V$.
  Define Hermitian conjugation on $\delta \in \Domain$ as usual by 
  $\langle \delta^\dagger u, v \rangle = \langle u, \delta v 
  \rangle$.
  The set of \emph{nominal DH descriptors} is 
  $\Obsvs := \{\delta \in \Domain \mid \delta^\dagger = \delta\}$.
  The \emph{support} of $\delta \in \Domain$, written $\supp \delta$, 
  is the set of variables on which $\delta$ has nontrivial action, 
  i.e.\ the minimal $V$ such that $\delta \in \img \iota_V$.
\end{definition}

Intuitively, the support of a descriptor is the set of variables that 
``appear'' in it.
For example, $\supp(Z_xX_y) = \{x,y\}$; however, the support excludes 
variables that can be simplified away, e.g.\ 
$\supp (Z_xZ_x) = \supp 1 = \varnothing$.

Each $\iota_V$ is an embedding 
$\mathbb{C}^{2^V\times 2^V} \hookrightarrow \Domain$, ensuring every 
$\delta \in \Domain$ can be manipulated as if it's an element of 
$\mathbb{C}^{2^V\times 2^V}$ for sufficiently large $V$.
For example, $X_x Z_y = Z_y X_x$ in $\Domain$ because 
$X_x Z_y = Z_y X_x$ in $\mathbb{C}^{2^{\{x,y\}}\times 2^{\{x,y\}}}$.
One can check that Hermitian conjugation also transfers along the 
$\iota$'s; that is, $(\iota_V A)^\dagger := \iota_V(A^\dagger)$ 
satisfies the definition of Hermitian conjugation on $\Domain$.
Therefore, the embedded manipulation extends to ${}^\dagger$.
Note $\Domain$ is the union of images of 
$\mathbb{C}^{2^V\times 2^V}$'s, so every $\delta \in \Domain$ has a 
finite multilinear expression in the $X_x$'s and $Z_x$'s that can be 
manipulated this way.

Now we can define the central data structure of our semantics, the 
store, which captures the state of quantum data in the 
calculus:

\begin{definition}
  A \emph{store} is a mapping $\rho : V \rightarrow \Obsvs^2$ where 
  $V \subseteq_\fin \Var$.
  Its \emph{support} is 
  $\supp \rho := \dom \rho \cup \bigcup_{x \in \dom \rho} \supp 
  (\rho(x))$.
  A store is \emph{full} when $\supp\rho = \dom \rho$, \emph{partial} 
  otherwise.
\end{definition}

The $V$ is the set of qubit names that $\rho$ assigns states to.
Individual nominal DH descriptors in $\img \rho$ may mention 
variables outside of $V$, suggesting entanglement with the outside 
world.
Nominal DH descriptors can be evolved exactly like ordinary DH 
descriptors, except the $X_x$'s and $Z_x$'s in their algebraic 
expressions identify qubits by names instead of numerical indices.

Thus, a store can be manipulated just like ordinary DH descriptor 
sets, except they can be split and merged with impunity.
For example, if $V$ partitions as $V = V_1\uplus V_2$, then $\rho$ 
can be split into parts $\rho_1$ and $\rho_2$ which are just domain 
restrictions $\rho_j = \rho|_{V_j}$.
Likewise, if $\rho$ is merged with another store 
$\mu : W \rightarrow \Obsvs^2$ (where $V \cap W = \varnothing$), the 
result is $\rho \cup \mu$ on the nose.

A density matrix can be recovered from such a store by adapting 
\cref{eq:dh-density}:
\begin{equation}\label{eq:nominal-dh-density}
  \density \rho
  = \sum_{s \in S^V}\biggl\langle 0^W \biggm| \prod_{x \in V} \rho(x)_{s_x} 
    \biggm| 0^W \biggr\rangle P_s
\end{equation}
where $P_s := \prod_{x \in V} (s_x)_x$ as before, $V := \dom \rho$, 
$W := \supp \rho$, and $0^W$ denotes the valuation sending everything 
in $W$ to 0.
Note $\density \rho \in \mathbb{C}^{2^V\times 2^V}$.
Each $\rho(x)_{s_x}$ is interpreted in $\mathbb{C}^{2^W\times 2^W}$ 
through the partial inverse of $\iota_W$, which is guaranteed to be 
defined on $\rho(x)_{s_x}$ due to the definition of $W$.

\section{A Spatially Compositional Process Calculus}
\label{sec:dh-ccs}

\begin{figure}[t]
  \begin{mathpar}
    c,d \in \textit{Chan} \and
    L \subseteq \textit{Chan} \and
    x,y,z \in \Var \and
    V,W \subseteq \Var \and
    q,r : \{x\} \rightarrow \Obsvs^2 \and
    \rho,\mu : V \rightarrow \Obsvs^2
  \end{mathpar}
  \[
    \begin{array}{rcl@{}llcl}
      a \in \textit{Action} & ::= & c!q \mid c?q \mid \tau \\
      P,Q \in \Proc & ::= &
      \multicolumn{5}{l}{
      0
      \mid \tau.P
      \mid P \setminus L
      \mid A(\tilde x)
      \mid P + Q\ [\FV P = \FV Q]}
      \\&\mid &
      c!x.P & [x \notin \FV P] &\mid& !!x.P & [x \notin \FV P]
      \\&\mid &
      c?x.P & [x \in \FV P] &\mid& ??x.P & [x \in \FV P]
      \\&\mid &
      U(\tilde x).P & [\{\tilde x\} \subseteq \FV P]
      &\mid& P \parcomp Q & [\FV P \cap \FV Q = \varnothing]
      \end{array}
    \]
    \begin{align*}
      \FV (P \parcomp Q) &= \FV P \cup \FV Q
      & \FV (P \setminus L) &= \FV P
      & \FV (A(\tilde x)) &= \{\tilde x\}
      & \Ch(c?q) &= c \\
      \FV (P + Q) &= \FV P \cup \FV Q
      & \FV (c!x. P) &= \{x\} \cup \FV P
      & \FV (!!x. P) &= \{x\} \cup \FV P
      & \Ch(c!q) &= c \\
      \FV (c?x. P) &= \FV P - \{x\}
      & \FV (??x. P) &= \FV P - \{x\}
      & \FV (U(\tilde x). P) &= \{\tilde x\} \cup \FV P
      & \Ch \tau &= \bot \\
      \FV 0 &= \varnothing  & \FV (\tau. P) &= \FV P &
    \end{align*}
    \caption{Syntax of DH-CCS.
      Production rules marked with $[\phi]$ apply only if $\phi$ is 
      true.}
  \label{fig:syntax}
\end{figure}

This section gives a spatially compositional quantum process 
calculus, DH-CCS, using nominal DH descriptors.
The design largely follows qCCS 
\cite{FengEtAl2007ProbabilisticBisimulationsQuantum} and lqCCS 
\cite{CeragioliEtAl2024QuantumBisimilarityBarbs}.

\Cref{fig:syntax} gives the syntax.
We have a stuck process $0$, silent prefix $\tau. P$, hiding 
$P \setminus L$, named recursive processes $A(\tilde{x})$, choice 
$P+Q$, send $c!x. P$, receive $c?x. P$, discarding $!!x.P$, 
allocation $??x.P$, local unitaries $U(\tilde x)$, and parallel 
composition $P \parcomp Q$.
We assume there are definitions for recursive processes of the form 
$A(\tilde x) := P$ where $\tilde x = \FV P$, and where $P$ may 
mention $A$ or other named recursive processes.
Every variable is of qubit type.
We leave out classical values and classical control (i.e.\ 
\texttt{if}-\texttt{then}-\texttt{else}), but measurement can be 
implemented by discarding; see \cref{sec:unification}.

We impose linearity constraints similar to lqCCS, but without a type 
system.
Each variable is thought of as a qubit with a definite physical 
identity, which cannot be in multiple places at once, so 
$P\parcomp Q$ requires $P$ and $Q$ to partition the set of available 
variables, while $c!x. P$ requires $x$ to never be used again, i.e.\ 
$x \notin \FV P$.
Reversibility means that qubits cannot be freely dropped, so $c?x. P$ 
requires that the received qubit be used somewhere, hence 
$x \in \FV P$.
Similarly, $U(\tilde{x})$ requires the $x$'s to be used afterwards, 
while $A(\tilde x) := P$ requires that every $x$ appear in $P$, even 
if it's just to be passed around to a recursive call.

\newif{\ifcong}
\congtrue
\begin{figure*}[t]
  \begin{mathpar}
    \inferrule[Send]{\rho|_{\{x\}} = q} 
    {\st{c!x.P}{\rho} \sstep{c!q} \st{P}{\rho|_{\dom \rho - \{x\}}}}
    \and
    \inferrule[Recv]{\dom q = \{y\} \and y 
      \notin \dom \rho} {\st{c?x.P}{\rho} \sstep{c?q} 
      \st{[y/x]P}{\rho\cup q}}
    \and
    \inferrule[Comm]{\st{P}{\rho|_{\FV P}} \sstep{c!q} 
      \st{P'}{\rho'_P} \and \st{Q}{\rho|_{\FV Q}} \sstep{c?q} 
      \st{Q'}{\rho_Q'}} {\st{P \parcomp Q}{\rho} \sstep{\tau} \st{P' 
        \parcomp Q'}{\rho_P'\cup\rho_Q'}}
    \and
    \inferrule[Discard]{ }{\st{!!x.P}{\rho} 
      \sstep{\tau} \st{P}{\rho|_{\dom \rho - \{x\}}}}
    \and
    \inferrule[Alloc]{x\text{ fresh}}{\st{??x.P}{\rho} 
      \sstep{\tau} \st{P}{\rho \cup \rho_0(x)}}
    \and
    \inferrule[unitary]{\phantom{U}}{\st{U(\tilde{x}).P}{\rho} 
      \sstep{\tau} \st{P}{f_{U(\tilde{x})}(\rho)^\dagger \, \rho\, 
        f_{U(\tilde{x})}(\rho)}}
    \and
    \inferrule[CongSend]{\st{P}{\rho|_{\FV P}} \sstep{c!q} \st{P'}{\rho'_P}} 
    {\st{P \parcomp Q}{\rho} \sstep{c!q} \st{P' \parcomp Q}{\rho'_P 
        \cup \rho|_{\FV Q}}}
    \and
    \inferrule[CongRecv]{\st{P}{\rho|_{\FV P}} \sstep{c?q} \st{P'}{\rho'_P} 
      \and \dom q \cap \dom \rho = \varnothing} {\st{P \parcomp 
        Q}{\rho} \sstep{c?q} \st{P' \parcomp Q}{\rho'_P \cup 
        \rho|_{\FV Q}}}
    \and
    \inferrule{ }
    {\st{\tau.P}{\rho} \sstep{\tau} \st{P}{\rho'}}
    \and
    \inferrule{\st{P}{\rho} \sstep{a} \st{P'}{\rho'}}{\st{P+Q}{\rho} 
      \sstep{a} \st{P'}{\rho'}}
    \and
    \inferrule{\st{P}{\rho} \sstep{a} \st{P'}{\rho'} \and \Ch a 
      \notin L}{\st{P \setminus L}{\rho} \sstep{a} \st{P \setminus 
        L}{\rho'}}
    \and
    \inferrule{A(\tilde{x}) := P}{\st{A(\tilde{y})}{\rho} 
      \sstep{\tau} \st{[\tilde{y}/\tilde{x}]P}{\rho}}
    \ifcong
      \and
      \inferrule{\st{P}{\rho|_{\FV P}} \sstep{\tau} \st{P'}{\rho'_P}}{\st{P 
          \parcomp Q}{\rho} \sstep{\tau} \st{P' 
          \parcomp Q}{\rho_P' \cup \rho|_{\FV Q}}}
    \fi
  \end{mathpar}
  \caption{Semantics for DH-CCS.
    Mirror-image rules for $\parcomp$ and $+$ are omitted.
    \ifcong\else Congruence rules are written only where they need 
      special care.\fi }
  \label{fig:semantics}
\end{figure*}

\Cref{fig:semantics} gives the semantics, defining a transition 
relation between configurations of the form $\st{P}{\rho}$ where 
$\rho : \FV P \rightarrow \Obsvs$.
Execution of a process term $P$ begins with $\st{P}{\rho_0(\FV P)}$ 
where $\rho_0(V)$, for any $V \subseteq \Var$, is the all-zero store 
that assigns to every $x \in V$ the descriptors $(X_x, Z_x)$ 
representing the pure state $\ket{0}$.

Send $\st{c!x.P}{\rho}$ works by splitting $\rho$ into the sub-store 
$q$ holding just the qubit named $x$ and the rest $\rho|_{\FV P}$, 
then sending out the $q$ in a $c!q$ message.
This is quite unusual---most quantum process calculi only communicate 
qubit names, not qubit states.
Receive $\st{c?x.P}{\rho}$ accepts a message $c?q$ carrying a qubit 
and substitutes its name for $x$; the $q$, being a store itself, is 
simply merged into $\rho$.
The $c!q$ and $c?q$ messages meet and annihilate into $\tau$ at 
$\parcomp$ in the standard manner.

Allocation $??x.P$ works just like $c?x.P$, except it works without 
any incoming message; intuitively, it receives a $??\rho_0(y)$ 
message with fresh $y$ from the environment over a dedicated, hidden 
read-only channel named ``$?$''.
Likewise, discarding $!!x.P$ is similar to $c!x.P$ over a dedicated, 
hidden write-only channel named ``$!$''.
A unitary $U(\tilde{x})$ is converted to its functional expansion 
$f_{U(\tilde{x})}$ (from \cref{sec:dh}) and applied to the store.


The other rules are mostly standard, but note how propagation rules 
for $c!q$ and $c?q$ must take care to observe linearity.
In particular, \textsc{CongRecv} requires that the in-flight qubit 
$q$ be distinct from those in the store $\rho$.
Since qubit identities are tied to names, we check this condition by 
checking the disjointness of $\dom q$ and $\dom \rho$.
A similar disjointness check is technically also needed for 
\textsc{Comm} and \textsc{CongSend} (in both cases 
$\dom q \cap \FV Q = \varnothing$), but there the checks are subsumed 
by the well-formedness condition for $P \parcomp Q$, as $P$ can only 
emit $c!q$ if $\dom q \subseteq \FV P$, which forces 
$\dom q \cap \FV Q = \varnothing$.
The check is not automatic for receiving because the name of an 
incoming qubit is chosen by the sender, not the local process.

The calculus does not require stores to satisfy the Pauli equations 
from \cref{defn:su2-algebra} that characterize physically valid DH 
descriptor sets.
The transition rules barely look at the contents of the descriptors, 
merely pushing around fragments of the store, so it works fine with 
unphysical descriptors.
Store validity becomes more interesting when we consider process 
equivalence, which hinges on the contents of the descriptors; see 
\cref{sec:bisimulation}.

As one would expect, qubits can be freely renamed in this calculus 
without affecting the semantics.
We follow Gabbay and Pitts \cite{GabbayPitts2002NewApproachAbstract} 
and represent renaming schemes by permutations rather than arbitrary 
functions.\footnote{We also borrowed the term ``support'' from them, 
  but our definition of $\supp \rho$ is not an instance of theirs, as 
  we preferred a definition that applies more directly to our 
  situation.}
In their terminology, DH-CCS's semantics is \emph{equivariant}.

\begin{definition}
  Given a permutation $\sigma : \Var \cong \Var$, define its actions 
  on various data as follows:
  \begin{itemize}
  \item $\sigma V := \sigma[V]$ for $V \subseteq \Var$.
  \item $\sigma P$ permutes all names in $P$, bound or free.
  \item $\sigma z = z$ for complex numbers $z$.
    This includes the Booleans $0,1$.
  \item $\sigma f := \sigma \circ f \circ \sigma^{-1}$ where $f$ is a 
    valuation, vector, linear map, or nominal DH descriptor, all seen 
    as functions $A \rightarrow B$ between sets $A$, $B$ on which 
    $\sigma$ action is defined.
  \item $\sigma (c!q) := c!(\sigma q)$ and 
    $\sigma (c?q) := c?(\sigma q)$ and $\sigma \tau := \tau$ for 
    actions.
  \item 
    $\sigma (x_1, \ldots, x_n) := (\sigma x_1, \ldots, \sigma x_n)$ 
    for any tuple of $x_j$'s on which $\sigma$ action is defined.
    This is a special case of functions (with $A = \{1,\ldots, n\}$) 
    and includes configurations and traces.
  \end{itemize}
  A function between objects with $\sigma$-action defined is 
  \emph{equivariant} iff it commutes with $\sigma$.
  A predicate is equivariant iff it is preserved by application of 
  $\sigma$ to its arguments.
\end{definition}

\begin{example}
  Let $\sigma$ swap $x,y$ while fixing everything else.
  Its action on:
  \begin{itemize}
  \item A Pauli symbol $X_x$ or $Z_x$ changes the subscript: 
    $\sigma X_x = X_y$, $\sigma Z_y = Z_x$.
  \item A nominal DH descriptor set 
    $[x \mapsto (X_x, Z_xX_y), y \mapsto (X_y, X_xZ_y)]$ swaps the 
    visible $x$'s and $y$'s: 
    $\sigma[x \mapsto (X_x, Z_xX_y), y \mapsto (X_y, X_xZ_y)] = [y 
    \mapsto (X_y, Z_yX_x), x \mapsto (X_x, X_yZ_x)]$.
    \qedhere
  \end{itemize}
\end{example}

\noindent Note that $\sigma$ preserves $\Domain$: it maps one finite 
multilinear combination of Pauli symbols to another.

\begin{proposition}\label{thm:equivariance}
  If $\st{P}{\rho} \sstep{a} \st{Q}{\mu}$, then 
  $\sigma\st{P}{\rho} \sstep{\sigma a} \sigma\st{Q}{\mu}$.
  \begin{proof}
    Induction on the derivation, using and proving equivariance.
    For example, if 
    $\st{c!x.P}{\rho} \sstep{a} \st{P}{\rho|_{\dom \rho - \{x\}}}$ is 
    derived by the \textsc{Send} rule, then inversion gives 
    $\rho|_{\{x\}} = q$.
    Applying $\sigma$ to both sides gives 
    $\sigma\rho|_{\{x\}} = \sigma q$.
    In a straightforward lemma, one can prove that domain restriction 
    is equivariant, so the $\sigma$ can be pushed inward, giving 
    $(\sigma\rho)|_{\{\sigma x\}} = \sigma q$.
    Applying \textsc{Send} to this fact gives 
    $\st{c!(\sigma x).(\sigma P)}{\sigma \rho} \sstep{c!(\sigma q)} 
    \st{\sigma P}{(\sigma \rho)|_{\dom (\sigma \rho) - \{\sigma 
        x\}}}$, and equivariance allows us to pull out the $\sigma$.
  \end{proof}
\end{proposition}

\section{Boundaries of Communication}
\label{sec:measurement}\label{sec:unification}

In this section, we explain how DH-CCS unifies allocation, 
discarding, and the probability-shifting aspects of measurement with 
communication.
These operations are all viewed as communication with the 
environment, hence as transfer of qubits across the system boundary.
Classical conditionals branching on measured values lie outside of 
this unification because it entails a change of perspective.

The semantics for allocation $??x.P$ and discarding $!!x.P$ are 
nearly identical to those of receiving $c?x.P$ and sending $c!x.P$, 
respectively.
This is not a coincidence.
In real-life quantum computers, allocation is the act of snatching an 
already-existing qubit on the machine and declaring it a part of the 
allocating process, just like in classical computing.
Discarding is a two-step process, where the quantum computer measures 
the qubit and then declares it no longer part of the process.
The commonality between them and communication is that all three are 
\emph{transfers of ownership} at heart.

Allocation and discarding can be seen as communication with the 
surrounding environment, transferring ownership across the 
process-environment boundary.
If we postulate a process named $\env$ which can transition in 
exactly two ways, $\env \sstep{!?q} \env$ and 
$\env \sstep{?!\rho_0(x)} \env$ for fresh $x$, then in any process 
$P$ appearing as $(\env \parcomp P) \setminus \{!,?\}$, the 
constructs $!!x.Q$ and $??x.Q$ become special cases of $c!x.Q$ and 
$c?x.Q$, respectively.
Even the measurement inherent in discarding is an artifact of 
ownership transfer.
Say we discard $x \in V$ from $\rho : V \rightarrow \Obsvs$.
Then the density matrix of the remaining qubits $V-\{x\}$ can be 
reconstructed from $\rho|_{V-\{x\}}$ using 
\cref{eq:nominal-dh-density}, and this is identical to the density 
matrix corresponding to $\rho$ with $x$ traced out, as shown in 
\cref{sec:dh}.

We chose not to explicitly present allocation and discarding as 
special cases of communication in our definition of DH-CCS because it 
introduces other technicalities:
\begin{itemize}
\item $\env$ is an unbounded process, doling out an unlimited supply 
  of fresh qubits, so we would need to either postulate it as a 
  special process or extend the syntax of recursive processes 
  $A(\tilde{x})$ to allow infinite $\tilde{x}$, with knock-on effects 
  on when fresh variables can be guaranteed.
\item The ordering of qubit discarding should be immaterial, but this 
  would not be true if one is allowed to observe the ordering of 
  $!!q$ actions.
  The $(\env \parcomp P) \setminus \{!,?\}$ combination ensures 
  order-insensitivity because $\env$ absorbs ordering differences and 
  $\setminus \{!,?\}$ ensures those messages don't escape further 
  out.
  To accurately reflect this fact, the definition of observation will 
  have to mimic the setup $(\env \parcomp P) \setminus \{!,?\}$, 
  requiring it to be aware of $\env$ and what it does.
\item There must be a notion of read-only and write-only channels to 
  prevent the $!$ and $?$ channels from being used for communication 
  within the main process.
\end{itemize}
These appear to be mere nuisances rather than insurmountable 
difficulties, but it's simpler just to treat allocation and 
discarding as built-in.

Nevertheless, DH-CCS contributes something new to the handling of 
allocation, discarding, and communication: it manages to formally 
unify the semantic action modeling ownership transfer in all three 
cases.
If a process sheds a qubit, its store is restricted as $\rho|_V$, and 
if a process gains a qubit, its store is absorbs the qubit as 
$\rho\cup q$---it doesn't matter \emph{why} those transfers happened, 
it's always the same underlying operation.
By contrast, all existing semantics we are aware of introduces a 
tensor product upon allocation while treating internal communication 
as a no-op on the store, only updating the ownership ledger of 
qubits.\footnote{The store operation of internal communication in 
  DH-CCS also amounts to a no-op, but it is the result of composing 
  the split and merge operations described here.}
Discarding is even more radically different: it introduces ensembles, 
or classical probability distributions over configurations.

The difference between communication and discarding is that the 
former can be reversed while the latter generally comes with a 
promise never to reverse it.
Ensembles and partial traces commit to this irreversibility, 
forgetting how a system was entangled with the discarded qubit.
This is why they are incomplete: having ensemble information or 
partial traces for all processes in a system does not allow to 
reconstruct the global state.

But this promise is just a convention, which need not be represented 
in the state.
Without the promise, discarding is just another communication.
DH descriptors can model ownership transfer in a way that is 
non-committal about whether that transfer may be reversed later on, 
thereby unifying discarding with communication.

Measurement is another form of communication: in the simplest case, 
it transfers an entangled copy\footnote{Not to be confused with a 
  clone, forbidden by no-cloning.
  By entangled copy, we mean the other end of a Bell pair: it always 
  gives the same measurement result as the original, whereas a clone 
  replicates the probability distribution of the original while 
  allowing individual trials to diverge.} of a qubit in the measured 
system to the measuring apparatus as a recording of the measurement 
outcome.
More generally, the apparatus can extract partial correlations with 
the system, but the implementation is always that an entangled qubit 
(or other unit of quantum data) moves from the system to the 
apparatus.
The apparatus is normally considered a part of the environment, so 
this too is a form of communication with the environment.

Communication with the environment is a quintessential open-system 
operation.
Existing semantics have trouble unifying these actions because they 
lack a clean way to model interactions across system boundaries 
whenever entanglement is a possibility.

Alas, while the way measurement shifts quantum uncertainty from one 
locus to another is a form of ownership transfer, the way that 
measured bits can be used for classical control lies outside this 
semantic unification.
Quantum computation has two notions of control, or conditional 
execution.
Quantum control, a.k.a.\ controlled unitary, branches on a qubit and 
executes a quantum superposition of the conditional operation and a 
no-op.
Classical control branches on a classical value and performs or calls 
off an operation entirely, creating a classical superposition of 
configurations known as an ensemble.
It is the creation of ensembles that cannot be modeled with ownership 
transfer.

The difference between quantum and classical control is a matter of 
perspective.
If we consider the quantum computer, its measurement apparatus, and 
its classical controller as a single quantum system, its evolution is 
still unitary.
Measurement is just a unitary operation that correlates the value of 
a qubit with the measurement apparatus' supposedly classical 
register---which is still a quantum system, as anything is.
Branching on this register is then understood as a controlled 
unitary, with the register as the control.

Classical branching appears only when we stop modeling the controller 
as part of the system and instead condition on the value recorded in 
it \cite{Everett1957RelativeStateFormulation}.
We decompose the quantum computer's state into so-called relative 
states indexed by the controller state and reason about each branch 
separately.
Weighting these branches by the probabilities of finding the 
controller in the corresponding state yields an ensemble.
Thus, classical control does not arise from a transfer of qubit 
ownership but from a change in description, where we identify with 
the controller and think of its register's value as defining part of 
our worldview.
This change in perspective is the part that does not fall under 
ownership transfer.

We leave classical control out of this paper because it appears to be 
a largely orthogonal extension to this paper's main concern of 
spatial compositionality.
One could add classical control to DH-CCS in the conventional way, 
modifying it to have a transition system over ensembles of 
configurations.
The ownership-transfer interpretation of allocation, discarding, 
communication, and measurement survives unchanged; it merely has to 
be applied separately to each branch of the ensemble.
Spatial compositionality would similarly assert that global 
evolution, weighted by probability of realization, can be pieced 
together from probability-weighted local evolution.

Thus, classical control seems to substantially complicate the 
presentation without adding much insight.
While it could be worthwhile to look for confirmation that classical 
control can \emph{really} be added to DH-CCS via ensembles, a more 
interesting question in our opinion would be whether, say, the more 
fluid handling of splitting enabled by DH descriptors provides an 
alternative understanding of conditioning to ensembles.
In any case, that warrants a separate paper.

One thing the reader may worry about is whether critical expressivity 
is lost by leaving out classical data and control.
We think not.
\begin{itemize}
\item Communication over classical channels is just quantum 
  communication where an entangled copy is sent off to the 
  environment.
  More formally, a channel $c$ is classical iff every send operation 
  on it has the form $??t.\,\CX(x,t).\,!!t.\,c!x.P$.
\item Unitaries under classical conditionals can be systematically 
  replaced by controlled unitaries, where the classical bits 
  controlling them are just qubits that happen to have entangled 
  copies in the environment.
\item Conditional sending of qubits depending on measurement outcome 
  is simulated by replacing conditional transfer of \emph{ownership} 
  with conditional transfer of \emph{payload}.
  Concretely, we do 
  $??y.\, \operatorname{CSWAP}(b,x,y).\,c!b.\,c!x.\,!!y.\,P$, where 
  $\operatorname{CSWAP}$ is the controlled-swap gate, which swaps 
  $x,y$ subject to $b$ being $1$.
  When $b=0$, instead of not sending anything, we send out a dummy.
  The $b$ itself is also sent out to tell the receiver whether the 
  payload is genuine or a dummy.
\item Conditional receive is simulated by unconditional reception 
  followed by discarding conditioned (with the same technique as 
  conditional send) on that $b$ bit which tells whether the payload 
  is genuine.
\end{itemize}
One thing that cannot be simulated well is dynamic selection of which 
channel to send on.
Channel selection can affect scheduling, which is a classical part of 
the model, so conditioning that on qubit values requires classical 
control.
But plain CCS also has no dynamic channel selection and is plenty 
useful; we believe the same holds for DH-CCS as presented here.

\section{Spatial Compositionality}
\label{sec:metatheory}

In this section, we define what spatial compositionality means and 
discuss why it is so difficult to ensure with density matrices.
We see that DH descriptors enable spatial compositionality because it 
contains compressed encodings of evolution history.

Spatial compositionality means that each global configuration 
$\st{P\parcomp Q}{\rho}$ defines process-local views $\st{P}{\rho_P}$ 
and $\st{Q}{\rho_Q}$, which, after independent evolution, can recover 
the global evolved configuration $\st{P' \parcomp Q'}{\rho'}$.
To understand how unique this property is, let us imagine what our 
options would be if the $\rho$ were a density matrix.
The only generally accepted way to restrict a density matrix to a 
subset of the qubits is the partial trace, but partial traces are 
inadequate as local views because they throw away information about 
entanglement with the traced out qubits.

\begin{definition}
  Let us write $\Phi^\pm$ to denote the two Bell states 
  $\Phi^\pm := \frac{1}{\sqrt{2}}(\ket{00} \pm \ket{11})$ and write 
  $\Bell^\pm(x,y).P$ for the processes that prepare $x,y$ in 
  respectively the $\Phi^{\pm}$ state before acting as $P$:
  \begin{align*}
    \Bell^+(x,y).P &:= H(x).\CX(x \rightarrow y).P \\
    \Bell^-(x,y).P &:= H(x).\CX(x \rightarrow y).Z(x).P.
  \end{align*}
  A \emph{trace} is a sequence of actions, denoted by metavariables 
  $t,s$.
  Let $\sstep[*]{t}$ denote zero or more transitions emitting $t$ 
  along the way.
  We may omit the $t$ when it's unimportant.
\end{definition}

\begin{example}\label{eg:partial-trace-problem}
  Let us show the information loss of partial traces.
  Let $P^{\pm} := \Bell^\pm(x,y).(A(x) \parcomp B(y))$, where $A(x)$ 
  and $B(x)$ are some named recursive processes.
  Letting 
  $\st{P^\pm}{\rho_0(x,y)} \sstep[*]{} \st{A(x) \parcomp 
    B(y)}{\rho^\pm}$, what should be the views of $\rho^\pm$ local to 
  the subprocess $A(x)$, denoted $\rho^\pm_A$?
  What about $\rho^\pm_B$?
  In DH-CCS, we just take $\rho^\pm_A := \rho^\pm|_{\{x\}}$ and 
  $\rho^\pm_B := \rho^\pm|_{\{y\}}$.
  We have no trouble inferring the global store $\rho^\pm$ on the 
  basis of these two fragments: 
  $\rho^\pm = \rho^\pm_A \cup \rho^\pm_B$.
  But if we were working in a semantics where $\rho$ is a density 
  matrix and we defined the fragments by partial traces, this 
  reconstruction of the global store would fail.
  The two partial traces would be 
  $\tr_x \rho^\pm = \tr_y \rho^\pm = \frac{1}{2}(\ket{0}\bra{0} + 
  \ket{1}\bra{1})$, which collapses the $\pm$ difference.
  By taking partial traces to localize the store, we have lost the 
  distinction between $\rho^+$ and $\rho^-$.
\end{example}

What else can we do in a semantics with density matrices as the 
state?
The SOS rules of existing calculi simply copy the global store to all 
subprocesses, shown here in DH-CCS notation:
\begin{mathpar}
  \inferrule[MonoCong]{\st{P}{\rho} \sstep{a} \st{P'}{\rho'}}%
  {\st{P \parcomp Q}{\rho} \sstep{a} \st{P' \parcomp Q}{\rho'}}
  \and
  \inferrule[MonoComm]{\st{P}{\rho} \sstep{c!x} \st{P'}{\rho'} \and 
    \st{Q}{\rho} \sstep{c?x} \st{Q'}{\rho'}}%
  {\st{P \parcomp Q}{\rho} \sstep{\tau} \st{P' \parcomp Q}{\rho'}}
\end{mathpar}
Notice how both the premises and the conclusion use the same $\rho$ 
before transition and $\rho'$ after.

This copy partition strategy amounts to giving local evolution a view 
of the whole system state.
Copy partition is good enough for reassembling the global state right 
after partitioning like in \cref{eg:partial-trace-problem}; however, 
it runs into trouble as soon as local processes begin evolving on 
their own.

\begin{example}\label{eg:copy-factorization-problem}
  This example shows copy partition also loses information.
  Consider
  \begin{mathpar}
    P^+ := \Bell^+(x,y). (Z(x).A(x) \parcomp I(y).B(y))%
    \and \text{and} \and%
    P^- := \Bell^-(x,y). (I(x).A(x) \parcomp Z(y).B(y))
  \end{mathpar}
  where $I(x)$ denotes the identity unitary transformation---a no-op 
  on $x$ formulated as a unitary to pad the transitions with a $\tau$ 
  to make them identical.
  The $A(x)$ and $B(x)$ are some named recursive processes, as 
  before.
  In $P^\pm$, the global state is initialized to $\Phi^\pm$ 
  respectively; consider what happens to these states in a density 
  matrix-based semantics where the process-local view is defined to 
  be just the global state.
  Noting that $Z\otimes I$ and $I\otimes Z$ flip $\Phi^+$ to $\Phi^-$ 
  and vice versa:
  \begin{itemize}
  \item In $P^+$, the $Z(x).A(x)$ flips $\Phi^+$ to $\Phi^-$ before 
    entering $A(x)$, while the $I(y).B(y)$ keeps it as $\Phi^+$ 
    before entering $B(y)$.
  \item In $P^-$, the $I(x).A(x)$ keeps $\Phi^-$ as $\Phi^-$ while 
    the $Z(y).B(y)$ evolves it to $\Phi^+$.
  \end{itemize}
  If we pause the execution when the process term has evolved to 
  $A(x) \parcomp B(y)$, what should the global store be?
  It could be $\Phi^+$ or $\Phi^-$---there's no way to tell based on 
  the local views, because in both $P^+$ and $P^-$, the $A(x)$ sees a 
  store evolved to $\Phi^-$ while $B(y)$ sees $\Phi^+$.
  Again, with DH-CCS there is no problem: the local views are just 
  the restrictions to $\{x\}$ and $\{y\}$, respectively, and however 
  they evolve, the resulting global state is their union.
\end{example}

In a monolithic semantics where the local view is not local at all 
but a copy of the global view, the only way not to lose sight of the 
correct global view appears to be to synchronize and recompose on 
every step.
This is exactly what the \textsc{MonoCong} rule above does: it 
restricts the transition of $P\parcomp Q$ to having only one of $P$ 
or $Q$ move, and the more up-to-date local view is immediately 
declared the global view.
This synchronization must be done at every step of each process.
Communication is the only case a monolithic semantics can allow both 
subprocesses to move simultaneously while ensuring 
local-global state correspondence is not lost.
This is because in a communication event, the global store does not 
change at all---it's only the ownership ledger that changes.
Thus, the communication message in a monolithic semantics transfers 
only the name of the qubit.

By contrast, DH-CCS has no trouble following the execution of each 
individual process for as long as we like.
No matter what they do, their local views are the respective 
restrictions of the global store.
At any moment, the corresponding global store is just the union of 
the processes' local-view stores.
This is the key advantage of DH descriptors, which enables modular 
reasoning and open system modeling.

So to repeat, what we mean by spatial compositionality is that every 
configuration $\st{P \parcomp Q}{\rho}$ splits into local 
configurations $\st{P}{\rho_P}$ and $\st{Q}{\rho_Q}$, that they can 
be evolved independently into $\st{P'}{\rho_P'}$ and 
$\st{Q'}{\rho_Q'}$, and that these results can be put back together 
to infer the evolved joint configuration $\st{P'\parcomp Q'}{\rho'}$.
This recomposition doesn't need any input besides the local 
configurations.

There are two complications to this understanding, which must be 
handled in order to formalize the notion of spatial compositionality:
\begin{enumerate}
\item \label{enum:consistency} We can only hope to piece together 
  local transition results that are consistent with each other.
  For example, if $x \in \FV P$, and if the execution of $Q$ ends up 
  with $x$ in its store, then the $P$ had better have sent that $x$ 
  out---that's the only way $Q$ can receive $x$ if $P$, $Q$ are 
  evolving together as $P \parcomp Q$.
  But taken out of context, $Q$ might well have received $x$ from 
  elsewhere, as $P$ may not even exist for all we know; hence, there 
  is nothing in the semantics to prevent a receive construct $c?x$ 
  within $Q$ from emitting an action $c?[x \mapsto \delta]$, a 
  request to receive qubit $x$, when $P$ doesn't contain $c!x$ 
  anywhere.
  Inconsistencies like this must be ruled out for recomposition to be 
  possible.
\item \label{enum:limited-info} We need to be careful what to give to 
  the recomposition procedure inferring $\rho'$ from $\rho_P'$ and 
  $\rho_Q'$.
  If it had access to entire local evolution histories, for example, 
  then recovering the global state would simply be a matter of 
  applying the SOS rules for parallel composition.
\end{enumerate}

For concern \cref{enum:consistency}, we restrict our attention to 
local transitions that result from partitioning a global transition.
Of course, we don't want to devolve into a tautology like ``if there 
is a global evolution consistent with the local evolution, then there 
is a global evolution.''
The point is that we can infer the global evolution from limited 
input about the local transitions, bringing us to concern 
\cref{enum:limited-info}.

For concern \cref{enum:limited-info}, we limit the recomposition 
process to working with just the final local evolution results 
$\st{P'}{\rho_P'}$ and $\st{Q'}{\rho_Q'}$.
These data suffice for recomposition in DH-CCS, but we will also show 
that adding local traces into the input set does not help with 
recomposition.
The point is that one needs an intensional view of what local 
processes have been up to in order to recompose---extensional 
information like traces are not enough.

\begin{definition}
  Denote trace concatenation by juxtaposition.
  When $S$ is a set of traces, $tS := \{ts \mid s \in S\}$.
  The set of all traces is written $\Trace$.
  The set of \emph{interactions} of a pair of traces $t,s$ is defined
  \begin{align*}
    \interact(\varepsilon, t) &:= \interact(t, \varepsilon) := \{t\} \\
    \interact((c!q)t,(c?q)s) &:= (c!q)\interact(t, (c?q)s)
                               \cup (c?q)\interact(t, (c!q)s)
                               \cup \tau \interact(t, s) \\
    \interact((c?q)t,(c!q)s) &:= (c!q)\interact((c?q)t, s)
                               \cup (c?q)\interact(t, (c!q)s)
                               \cup \tau \interact(t, s) \\
    \interact(at,bs) &:= a\interact(t, bs) \cup b\interact(at, s)
  \end{align*}
  where the first applicable rule wins.
\end{definition}

The interactions are just the set of traces that could be emitted 
from the parallel composition of processes emitting $t,s$ 
respectively.
It is the set of all interleavings of $t,s$ with zero or more 
occurrences of $(c!q)(c?q)$ or $(c?q)(c!q)$ collapsed to $\tau$, 
indicating an application of the \textsc{Comm} rule.

\begin{definition}[Spatial Compositionality]
  \label{thm:spatial-compositionality}
  A calculus is \emph{spatially compositional} iff it admits
  \begin{itemize}
  \item A \emph{partition function} 
    $\partition : \Proc^2 \rightarrow \State \rightarrow \State^2$,
  \item A partial \emph{recomposition function} 
    $\recons : (\Trace\times\Proc)^2 \rightarrow \State^2 
    \rightharpoonup \State$ \hspace{1em}(note the $\rightharpoonup$)
  \end{itemize}
  satisfying the following \emph{compositionality condition}: if 
  $\st{P\parcomp Q}{\rho} \sstep[*]{u} \st{P'\parcomp Q'}{\rho'}$, 
  then $\st{P}{\rho_P} \sstep[*]{t} \st{P'}{\rho_P'}$ and 
  $\st{Q}{\rho_Q} \sstep[*]{s} \st{Q'}{\rho_Q'}$ for some 
  $\rho_P', \rho_Q', t$ with 
  $\recons_{t,P',s,Q'} (\rho_P', \rho_Q') = \rho$ and 
  $u \in \interact(t,s)$.
\end{definition}

\noindent For DH-CCS, the partition and recomposition functions are of course 
restriction and union.

\begin{theorem}\label{thm:spatial-compositionality}
  DH-CCS as defined in \cref{sec:dh-ccs} is spatially compositional, 
  where:
  \begin{itemize}
  \item The partition function is 
    $\partition_{P,Q}(\rho) := (\rho|_{\FV P}, \rho|_{\FV Q})$.
  \item The recomposition function is 
    $\recons_{t,P,s,Q}(\rho_1, \rho_2) = \rho_1 \cup \rho_2$.
  \end{itemize}
  \begin{proof}
    Straightforward induction.
  \end{proof}
\end{theorem}

Let us call a \emph{state-only} semantics one where the $\rho$ of a 
configuration $\st{P}{\rho}$ carries the same information as the 
density matrix of the system's quantum state.
What \cref{eg:partial-trace-problem,eg:copy-factorization-problem} 
show is that a state-only semantics cannot be spatially compositional 
under the obvious partition schemes.

\begin{proposition}
  In a state-only semantics, neither the partial trace nor the copy 
  map $\rho \mapsto (\rho,\rho)$ work as a partition function.
  \begin{proof}
    The examples exhibit cases where two distinct global 
    configurations produce identical pairs of local configurations.
    Even the evolution traces are identical, so by the state-only 
    assumption, there is no other information to go on, and a 
    recomposition function cannot exist.
  \end{proof}
\end{proposition}

To the best of our knowledge, there is no other reasonable candidate 
for the partition function for density matrices in a state-only 
semantics.
To enable recomposition, it seems we need to carry more information 
than density matrices.
What can that information be?

An answer can be gleaned from 
\cref{eg:partial-trace-problem,eg:copy-factorization-problem}.
Notice how, in each case, we could successfully recover the global 
configuration from local ones if only we knew which global 
configuration we started from, or what operations the local processes 
performed during their evolution.
In other words, if we know something about the history of the 
evolution, we could recompose.

This leads us to the idea of a \emph{historical semantics}, a simple 
formulation of which goes as follows.
The $\rho$ is a mapping from variables to lists of unitaries 
performed on each qubit, e.g.
\[
  [x \mapsto [X(x), \CX(x \rightarrow y), \ldots], y \mapsto [\CX(x \rightarrow y), Z(y), \ldots]]
\]
where each variable lists only the unitaries that involved it, in 
chronological order.
The transition rules are identical to the standard semantics, except 
that unitary application simply records the unitary instead of 
performing any semantic action:
\begin{equation*}
  \inferrule{ }{\st{U(x_1,\ldots,x_n).P}{\rho} \sstep{\tau} 
    \st{P}{\rho[\forall j.\ x_j \mapsto \rho(x_j) \concat [U(x_1, \ldots, x_n)]]}}
\end{equation*}
where $\concat$ denotes concatenation.

\begin{proposition}
  Historical semantics is spatially compositional, with the partition 
  function, recomposition function, and consistency predicate defined 
  analogously to the standard DH-CCS semantics.
\end{proposition}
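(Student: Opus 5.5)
Since the historical semantics differs from DH-CCS only in how the \textsc{unitary} rule acts on the store, the plan is to replay the proof of \cref{thm:spatial-compositionality} with the same partition function $\partition_{P,Q}(\rho) := (\rho|_{\FV P}, \rho|_{\FV Q})$ and recomposition function $\recons_{t,P,s,Q}(\rho_1,\rho_2) := \rho_1 \cup \rho_2$. Here stores map variables to unitary histories instead of descriptor pairs. The only property of stores the proof needs is the locality fact of \cref{thm:dh-modularity}: an operation on some qubits leaves the entries of all other qubits untouched. For histories this holds by construction, because the historical \textsc{unitary} rule updates $\rho(x_j)$ only for the $x_j$ named in $U(x_1,\ldots,x_n)$. \textsc{Send}, \textsc{Recv}, \textsc{Discard}, \textsc{Alloc} and the congruence rules are verbatim restriction and union, with $\rho_0(x) := [x \mapsto [\,]]$.

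The core step is a single-step lemma, proved by induction on the derivation. Suppose $\st{P\parcomp Q}{\rho} \sstep{a} \st{P'\parcomp Q'}{\rho'}$. Then there are $\rho_P', \rho_Q'$ with $\rho' = \rho_P' \cup \rho_Q'$, and $\st{P}{\rho|_{\FV P}}$ and $\st{Q}{\rho|_{\FV Q}}$ each take either zero steps or one step to reach $\st{P'}{\rho_P'}$ and $\st{Q'}{\rho_Q'}$. Their emitted actions $t_1,s_1$ satisfy $a \in \interact(t_1,s_1)$.
\begin{itemize}
\item For the congruence rules (\textsc{CongSend}, \textsc{CongRecv}, $\tau$-congruence), the moving side takes its premise step. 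The idle side keeps $\rho|_{\FV Q}$, which is unchanged because of the conclusion's form $\rho'_P \cup \rho|_{\FV Q}$.
\item For \textsc{Comm}, both sides take their premise steps. The trace $(c!q)$ against $(c?q)$ contributes $\tau$ to $\interact$.
\end{itemize}
The premises already run on $\rho|_{\FV P}$ and $\rho|_{\FV Q}$, so no store commutation argument is needed. I will also check that $\FV P' \cap \FV Q' = \varnothing$ and $\dom\rho_P' = \FV P'$ are preserved, so that the union is a disjoint union and restricting $\rho'$ back recovers $\rho_P'$ and $\rho_Q'$.

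The multi-step statement then follows by induction on the length of $\st{P\parcomp Q}{\rho} \sstep[*]{u} \st{P'\parcomp Q'}{\rho'}$. First, every intermediate configuration still has the form $\st{P_i \parcomp Q_i}{\rho_i}$, since the only rules for $\parcomp$ are congruence and \textsc{Comm}. Second, each step's local pieces compose with the previous local runs: the local end stores $\rho_i|_{\FV P_i}$ are exactly the local start stores of the next step. Third, the traces concatenate compatibly with $\interact$, using the lemma that $u_1 \in \interact(t_1,s_1)$ and $u_2 \in \interact(t_2,s_2)$ imply $u_1u_2 \in \interact(t_1t_2, s_1s_2)$. The recomposition identity is then read off at the end: the compositionality condition wants $\recons(\rho_P',\rho_Q') = \rho'$, the final joint store (the statement's ``$=\rho$'' is evidently a typo).

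I expect the main obstacle to be this $\interact$ concatenation lemma. The ``first applicable rule wins'' clauses make its induction fiddly, since a leading $c!q$ on one side may pair with a $c?q$ arising later. I would first prove a cleaner characterization: $\interact(t,s)$ is the set of all interleavings of $t$ and $s$ in which some adjacent complementary pairs, one from each side, are merged into $\tau$. Concatenation closure is immediate from that characterization. A lesser point is the freshness side conditions of \textsc{Recv} and \textsc{Alloc} under restriction: a name fresh for $\rho$ is fresh for $\rho|_{\FV P}$. This direction is trivial, because we only decompose a global run and never build one.
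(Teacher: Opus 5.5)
Your proposal is correct and is essentially the paper's argument. The paper gives no separate proof here; it relies on the same restriction/union witnesses and the ``straightforward induction'' it uses for DH-CCS, which you spell out in detail. In fact even the locality property you cite is never used, since the SOS premises already run on the restricted stores, and you are right that the conclusion should read $\rho'$.

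One caution for your $\interact$ lemma. The printed second clause contains $(c?q)\interact(t,(c!q)s)$ where $(c?q)\interact((c!q)t,s)$ is evidently meant, so your interleaving characterization, and with it concatenation closure, holds only for the corrected definition. This is harmless for the theorem itself: the \textsc{CongRecv} disjointness check forbids one side from visibly receiving a qubit the other side still holds, and that is exactly the branch where the misprint would matter.
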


Historical semantics is a sort of syntactic model for distributed 
quantum computation, and it is just as unsatisfactory as syntactic 
denotational models generally are.
A concrete problem with historical semantics is that it is not 
\emph{state-like}: it encodes not the current state of a system but 
lugs around its entire evolution history, so it's not informative 
about what quantum information looks like or where it is localized 
at.
This unstate-like nature is evidenced by how the size of $\rho$ grows 
unboundedly over time, even when the number of qubits at play is 
bounded.

DH descriptors, by contrast, have storage size bounded by a function 
in the number of qubits involved.
As discussed in \cref{sec:dh,sec:splittable-dh}, DH descriptors are 
multilinear expressions in the Pauli symbols, by which we mean the 
$X_x$'s and $Z_x$'s.
If the number of qubits involved is fixed, then the number of 
distinct Pauli symbols is fixed, and there are only a fixed number of 
distinct monic, multilinear monomials in those symbols.
The entire multilinear expression can be seen as a list of 
coefficients for those monomials, so the size of the whole store is 
bounded.

Thus, DH descriptors can be seen as a way to compress the history by 
exploiting the algebraic properties of unitary operators.
It is no coincidence that our construction of nominal DH descriptors 
had a ``syntax modulo equalities'' vibe: it's an encoding of history, 
which is syntactic, but compressed using semantic equalities.

\begin{remark*}
  Lurking in this discussion is, of course, an annoying question 
  about how to measure the size of numbers.
  We are counting complex numbers as unit size here, but 
  infinite-precision complex numbers offer infinite information 
  storage.
  It is possible to encode unbounded history in a single complex 
  number, say in the binary expansion of its real part, so if we 
  count any complex number as having unit size, a variation on 
  historical semantics guaranteeing unit-size storage would exist.
  We can try to patch this loophole by carefully defining what usage 
  of numbers count as unit size (e.g.\ by defining the set of complex 
  numbers that may arise in states reachable in $k$ steps), but the 
  intellectual payout of this exercise is minuscule.
  We simply trust the reader to understand what we mean instead.
\end{remark*}

\section{Process Equivalence}
\label{sec:bisimulation}

In this section, we define an observational equivalence for DH-CCS to 
clarify which aspects of a process' behavior is considered to matter.
The equivalence can be seen as a barbed bisimulation 
\cite{MilnerSangiorgi1992BarbedBisimulation}, though with a somewhat 
unusual barb that observes values instead of actions.
It quantifies over all observers, so in that sense, it is also a form 
of contextual equivalence.

Let us start with the formal definitions and defer rationales.
There are enough aspects to discuss that it will help to have the 
formal definition upfront as a point of reference.
An important auxiliary definition is that of name conversions, used 
to collapse differences in the way qubits are named---it's only the 
relative roles of the qubits that matter, not their names.

\begin{definition}
  A \emph{name conversion} is an injection 
  $\gamma : V \hookrightarrow \Var$ for $V \subseteq_\fin \Var$.
  Two name conversions $\gamma_1, \gamma_2$ are \emph{consistent} iff 
  $\forall x \in \dom \gamma_1 \cap \dom \gamma_2.\ \gamma_1 (x) = 
  \gamma_2 (x)$.
  Define the action of $\gamma$ on $\rho$ by 
  $\gamma \rho := \gamma\circ \rho\circ \gamma^{-1}$ using the 
  partial inverse $\gamma^{-1}$.
\end{definition}

\noindent Note that the action of $\gamma$ on $\rho$ is just the same 
action when $\gamma$ is extended to a permutation on $\Var$, provided 
$\dom \gamma \supseteq \supp \rho$.

\begin{definition}\label{defn:valid-rho}
  A store $\rho$ is \emph{valid} iff it satisfies the Pauli 
  equations.
\end{definition}

\begin{definition}\label{defn:bisim}
  Let $\st{P}{\rho} \Sstep{a} \st{P'}{\rho'}$ denote weak transition, 
  i.e.\ an $a$-transition preceded and followed by zero or more 
  $\tau$-transitions.
  An equivalence relation $R_\Gamma$ indexed by a set of name 
  conversions $\Gamma$ is an \emph{observer bisimulation} iff 
  $\st{P_1, O_1}{\rho_1} \mathbin{R}_\Gamma \st{P_2, O_2}{\rho_2}$ implies 
  that $\st{P_1\parcomp O_1}{\rho_1}$ and 
  $\st{P_2\parcomp O_2}{\rho_2}$ are well-formed configurations, and:
  \begin{enumerate}
  \item \label{enum:bisim:empty} $\Gamma \neq \varnothing$.
  \item \label{enum:bisim:coverage}
    $\forall \gamma \in \Gamma.\ \dom \gamma = \supp \rho_1|_{\FV O_1} \land 
    \img \gamma = \supp \rho_2|_{\FV O_2}$.
  \item 
    $\forall \gamma \in \Gamma.\ \density (\gamma (\rho_1|_{\FV O_1})) = 
    \density (\rho_2|_{\FV O_2})$.
  \item \label{enum:bisim:valid} $\rho_1$ and $\rho_2$ are valid.
  \item \label{enum:bisim:follows} If 
    $\st{P_1 \parcomp O_1}{\rho_1} \Sstep{a_1} 
    \st{P_1' \parcomp O_1'}{\rho_1'}$ then 
    $\st{P_2 \parcomp O_2}{\rho_2} \Sstep{a_2} 
    \st{P_2' \parcomp O_2'}{\rho_2'}$ such that 
    $\st{P_1',O_1'}{\rho_1'} \mathbin{R}_{\Gamma'} 
    \st{P_2',O_2'}{\rho_2'}$ for some $\Gamma'$ where every 
    $\gamma' \in \Gamma'$ is consistent with at least one 
    $\gamma \in \Gamma$.
  \end{enumerate}
  Observer bisimilarity, written $(\obssim{\Gamma})$, is the largest 
  observer bisimulation.
  \emph{Observer equivalence} $P \obseq Q$ holds when 
  $\st{P, O}{\rho|_{\FV (P \parcomp O)}} \obssim{\Gamma_O} \st{Q, 
    O}{\rho|_{\FV (Q \parcomp O)}}$ for any $O$ and $\rho$ satisfying 
  well-formedness conditions, where $\Gamma_O := \{\id_{\FV O}\}$.
  We call the $O$ in $\st{P,O}{\rho}$ an \emph{observer}, the $P$ a 
  \emph{subject}, and the configuration 
  $\st{(P \parcomp O) \setminus \Chan}{\rho}$ and its trajectory of 
  reductions an \emph{experiment}.
\end{definition}

Now let us argue that this definition faithfully captures the notion 
of distinguishability by physically allowable processes.
The starting point for this definition is that DH descriptors are 
\emph{gauge}, making finer distinctions than density matrices.
This is considered a problem because density matrices are the gold 
standard of a quantum system's information content, and any finer 
distinctions deserve to be quotiented away 
\cite{WallaceTimpson2007NonlocalityGaugeFreedom}---but do they all?

Gauge freedom should be familiar to many readers who have worked with 
ensemble-based semantics (cf.~\cref{sec:unification}).
Discarding the first qubit in the Bell state 
$\Phi^+ = \frac{1}{\sqrt{2}}(\ket{00} + \ket{11})$ leaves 
$(\frac{1}{2}\ket{0}, \frac{1}{2}\ket{1})$, but an Hadamard on the 
first qubit before discarding changes that to 
$(\frac{1}{2}\ket{+}, \frac{1}{2}\ket{-})$.
Both are the maximally mixed state, described by the same density 
matrix, and indeed, these states are experimentally 
indistinguishable.
The work on quantum process equivalence by Ceragioli et al.\ 
\cite{CeragioliEtAl2024QuantumBisimilarityBarbs} devotes substantial 
effort to preventing observers from exploiting the distinction 
between these ensembles in order to ensure a physically justified 
equivalence.

But DH descriptors' gauge freedom is partly of a different nature.
For instance, the maximally mixed state on a local qubit $x$ could 
arise due to entanglement with a remote qubit $y$ or with remote 
qubit $z$, and DH descriptors remember which.
Both states give the same density matrix; yet, for purposes of future 
communication, it very much matters which one is at hand.
Density matrices collapse this difference, forgetting details of how 
the system was entangled with the outside world.

DH descriptors also have their share of distinctions that are simply 
unwanted.
For example, preparing the Bell state by Hadamard + CX in the usual 
way gives two slightly different descriptors based on which of the 
two qubits was fed to the Hadamard 
\cite{HorsmanVedral2007DevelopingDeutschHayden}.
The gauge freedom of ensembles come from over-specifying the 
decomposition of a state into smaller states; the gauge freedom of DH 
descriptors come from over-specifying the state's preparation 
history.

We don't want to lose the distinctions made on entanglement 
structure, and these are distinctions that only matter in the future, 
once further communication happens.
As such, we want to compare processes by testing for current 
equivalence by collapsing the quantum states to density matrices, 
while also looking out for future divergence.
Recalling that density matrices are a summary of measurement 
probabilities (see discussions leading up to \cref{eq:dh-density}), 
this amounts to equivalence under an observer that can terminate the 
experiment at any time and measure the qubits.
Note that this does not mean the observer actually performs those 
measurements constantly.
The observer only \emph{threatens} to do so, saying that putatively 
equivalent processes had better not create a window of opening for 
distinguishing measurements, or else.

This current-density comparison is a continuous monitoring process, 
so the comparison takes the form of a bisimulation, with the 
current-density comparison as a barb.
The observer is initially identical on both sides, so they operate 
according to the same blueprint, but they may drift apart due to 
differences in how the subjects behave, provided those differences do 
not trigger the barb.

Moreover, the current-density check should be restricted to qubits 
accessible to the observer, to avoid observing internal bookkeeping 
data.
For example, a process that computes with a single qubit should be 
identified with another process performing the same algorithm on a 
triplicated qubit; yet, if we gave the observer access to internal 
qubits, not even the sizes of the density matrices would match!
The simplest way to account for which qubits are held by the observer 
is to make the observer an in-calculus process.
Thus, we get the form of experiment $\st{P\parcomp O}{\rho}$.

We must then guard the experiment against unphysical behavior by the 
environment.
An especially egregious violation of physics is what we call 
unilateral disentanglement: suppose the experiment sends out a qubit 
$x$ entangled with qubit $y$, but $y$ remains with the experiment.
There is nothing in the semantics thereafter to prevent the 
experiment from receiving the message $c?[x \mapsto (X_x, Z_x)]$, 
which claims that $x$ is now $\ket{0}$, disentangled from $y$, when 
the local DH descriptor for $y$ mentions $x$, proving entanglement.

There are two avenues for unphysical states to creep in, with two 
solutions.
One is incoming messages $c?q$ as we just mentioned.
The simple solution here would be to close off the experiment like 
$(P \parcomp Q) \setminus \Chan$.
This is actually not a bad choice for observer bisimulation, as it 
helps to curb the number of cases one has to consider when reasoning 
directly with this equivalence.
No generality is lost: if there are any external processes the 
subject might want to communicate with, those can be included in the 
observer.

Unfortunately, closed experiments don't help to plug the other source 
of unphysical states: the store $\rho$ with which the experiment 
begins.
To rule out unphysical stores, we need to characterize physically 
realizable stores.
Fortunately, we have just such a characterization---the Pauli 
equations from \cref{defn:su2-algebra} (adapted to nominal DH 
descriptors in the obvious way).
We've noted that the physically realizable DH descriptor sets are 
those of the form $U^\dagger \rho_0(V) U$, and that they satisfy the 
Pauli equations.
Perhaps surprisingly, the converse also holds, and to boot, for 
partial descriptor sets that don't assign descriptors to all 
variables under consideration.

\begin{lemma}\label{thm:sqrt-polynomial}
  If $P$ is a positive matrix, then $\sqrt{P}$ is a polynomial in $P$.
  \begin{proof}
    By the spectral theorem, 
    $P = \diag(\lambda_1, \ldots, \lambda_n)$ and 
    $\sqrt{P} = \diag(\sqrt{\lambda_i}, \ldots, \sqrt{\lambda_n})$ in 
    some ONB.
    We can fit a polynomial $p$ to the points 
    $(\lambda_i, \sqrt{\lambda_i})$ so that 
    $\forall i.\ p(\lambda_i) = \sqrt{\lambda_i}$.
    Then $p(P) = \sqrt{P}$.
  \end{proof}
\end{lemma}

\begin{theorem}[Descriptor Completion]\label{thm:completion}
  Let $\rho : V \rightarrow \Obsvs$ and $W := \supp \rho$.
  If $\rho$ satisfies the Pauli equations, then there exists a 
  completion $\mu : W \rightarrow \Obsvs$ such that $\mu|_V = \rho$ 
  and $\mu = U^\dagger \rho_0(W)U$ for some unitary $U$.
  \begin{proof}
    It suffices to prove this for $\rho : V \rightarrow \Obsvs_W$; 
    the results then transfer to $\Obsvs$ through the embedding 
    $\iota_W$.
    Let us write $X_x^V, Z_x^V$ to mean 
    $X_x, Z_x \in \mathbb{C}^{2^V\times 2^V}$ and $X_x^W, Z_x^W$ to 
    mean $X_x, Z_x \in \mathbb{C}^{2^W\times 2^W}$.
    One can check that the Pauli equations ensure that the mapping 
    from single-qubit Paulis
    \begin{mathpar}
      \forall x \in V. \and \varphi(X_x^V) := \rho(x)_X \and \varphi(Z_x^V) 
      := \rho(x)_Z
    \end{mathpar}
    extends to a unital $\mathbb{C}$-algebra homomorphism 
    $\varphi : \mathbb{C}^{2^V\times 2^V} \rightarrow 
    \mathbb{C}^{2^W\times 2^W}$ preserving ${}^\dagger$.

    By the Skolem-Noether theorem 
    \cite[Thm.~4.9]{Jacobson2009BasicAlgebraII}, $\varphi$ is 
    conjugate to 
    $-\otimes 1 : \mathbb{C}^{2^V\times 2^V} \hookrightarrow 
    \mathbb{C}^{2^W\times 2^W}$, i.e.\ 
    $\varphi(A) = S^{-1}(A\otimes 1)S$ for some invertible 
    $S \in \mathbb{C}^{2^W\times 2^W}$.
    By the polar decomposition, $S = \sqrt{SS^\dagger}U$ for a 
    unitary $U$.
    Expanding out $\varphi(A^\dagger) = \varphi(A)^\dagger$ and 
    rearranging gives 
    $(A\otimes 1)SS^\dagger = SS^\dagger (A\otimes 1)$.
    By \cref{thm:sqrt-polynomial}, $\sqrt{SS^\dagger}$ must also 
    commute with $A\otimes 1$, and 
    $\varphi(A) = U^\dagger\sqrt{SS^\dagger}^{-1}(A\otimes 
    1)\sqrt{SS^\dagger}U = U^\dagger(A\otimes 1)U$.

    Thus, $\rho(x)_X = \varphi(X_x^V) = U^\dagger X_x^W U$ and 
    $\rho(x)_Z = \varphi(Z_x^V) = U^\dagger Z_x^W U$ for $x \in V$.
    Defining $\mu|_V := \rho$ and $\mu(y)_X := U^\dagger X_y^W U$ and 
    $\mu(y)_Z := U^\dagger Z_y^W U$ for $y \in W-V$ gives the 
    promised completion.
  \end{proof}
\end{theorem}

Therefore, a partial store, regardless of the source, is physically 
realizable iff it satisfies the Pauli equations.
Thus, checking for these equations amounts to checking for physical 
realizability, both for communicated qubits and qubits found in the 
store.
In \cref{defn:bisim}, we check for store validity in clause 
\cref{enum:bisim:valid}, at each step of the observer bisimulation.

Current-density comparisons must be performed modulo renaming of 
qubits because names are not supposed to matter 
(cf.~\cref{thm:equivariance}).
When the observer receives $c?[x \mapsto \delta_1]$ from $P_1$ but 
$c?[y \mapsto \delta_2]$ from $P_2$, that can simply signify a 
different memory management technique having no bearing on the 
computational content.
Thus, upon receipt of a qubit, the observer must act as follows, 
depending on whether the $x,y$ pair is novel.
\begin{itemize}
\item If the pair is new to the observer, this name difference is not 
  grounds for distinction.
  The observer must carry on, noting the correspondence 
  $[x \mapsto y]$ that accumulates into a name conversion $\gamma$.
  If there is more than a naming mismatch, it will be detected in the 
  current-density comparison that comes immediately after.
\item If the pair is known to the observer (having seen qubits 
  entangled with $x,y$ before but not $x,y$ themselves), there is an 
  entry in the conversion table for $x$, so $\gamma x = y$ should be 
  checked; failure is clear grounds for distinction of the subjects.
  However, this check is subsumed by the current-density check, so 
  all the observer needs to do is carry on.
\end{itemize}

In fact, in both cases, $\gamma$ can't be updated properly until the 
current-density comparison.
The $\delta_1$ and $\delta_2$ may contain further novel variables as 
representations of entanglement, and these cannot simply be compared 
textually due to the gauge freedom of DH descriptors.
The only way to compare them, entanglement and all, is to incorporate 
them into the observers' stores and compare them under extensions 
$\gamma'$ of $\gamma$.

When current density matrices do turn out to be equivalent, there is 
not always a unique $\gamma'$ that successfully identify them, so we 
do not commit to any one of them, keeping around the whole set 
$\Gamma'$.
But any new $\gamma' \in \Gamma'$ does have to be an extension of the 
old $\gamma$, so $\Gamma'$ is restricted to those $\gamma'$ which are 
consistent with $\gamma$.
The comparison fails when no viable $\gamma'$ exists (clause 
\cref{enum:bisim:empty} of \cref{defn:bisim}).

Punting the handling of name conversions to the density comparison 
this way has the advantage that allocation is automatically taken 
care of as well.
Unlike the $c?q$ construct, allocation does not produce a visible 
action.
Instead of looking at actions, the actual equivalence looks at the 
store after the qubit is incorporated and checks that $\gamma'$ 
covers the extended store on both sides (clause 
\cref{enum:bisim:coverage}).

Sending and discarding works similarly, except they shrink the name 
conversion.
When the observer emits $c![x \mapsto \delta]$ or discards 
$[x \mapsto \delta]$, the store shrinks, and $\Gamma$ is updated 
accordingly by dropping $x$ and the variables in $\delta$, except for 
those that appear in the remaining store (indicating lingering 
entanglement).
Once again, the name conversion should remain unchanged for those 
names that remain visible to the observer, so new name conversions 
are checked to be consistent with the old one (last part of clause 
\cref{enum:bisim:follows}).

As a final point of caution, we must \emph{not} observe the $\delta$ 
part of transition labels $c![x \mapsto \delta]$ and 
$c?[x \mapsto \delta]$ coming out of the subject.
Gauge freedom means that observations of these descriptors must be 
left to the current-density comparison.
That leaves us with observation of the communication outline, by 
which we mean the $c!$ or $c?$ parts.
Observing outlines is superfluous because the observer can see them.
It can have qubits $x_{c!}$ and $x_{c?}$ for each channel $c$ it 
touches, and instead of doing (say) $c?x.P$, it can do 
$c?x.X(x_{c?}).P$, flipping $x_{c?}$ as soon as a $c?$ message 
arrives.
Then any differences in communication outlines is immediately flagged 
to the current-density comparison.
Therefore, observer bisimulations need not check labels explicitly at 
all.

\ifproofdetails


  \begin{lemma}\label{thm:fresh-equivariant}
    If $x$ is fresh for some set of names $V$, then $\sigma x$ is 
    fresh for $\sigma V$.
    \begin{proof}
      If $\sigma x \in \sigma V$, then by definition of $\sigma V$, 
      we have $\sigma x = \sigma y$, hence $x = y$, for some $y \in V$.
    \end{proof}
  \end{lemma}

  \begin{lemma}
    For vectors $u \in \mathbb{C}^{(2^V)}$ and 
    $v \in \mathbb{C}^{(2^W)}$ where $V, W$ may be infinite, 
    $\sigma (u\otimes v) = (\sigma u)\otimes (\sigma v)$.
    \begin{proof}
      $\sigma (u\otimes v) (\eta \cup \bar{\eta}) = (u\otimes 
      v)(\sigma^{-1} \eta \cup \sigma^{-1} \bar{\eta}) = 
      (u(\sigma^{-1} \eta)) \otimes (v(\sigma^{-1} \bar{\eta})) = 
      ((\sigma u) \eta) \otimes ((\sigma v) \bar{\eta}) = ((\sigma u) 
      \otimes (\sigma v)) (\eta \cup \bar{\eta})$.
      Bear in mind these are vectors, i.e.\ maps from valuations to 
      complex numbers, so there's no outer $\sigma$.
    \end{proof}
  \end{lemma}

  \begin{lemma}
    For  linear endomorphisms on different spaces,
    $\sigma (A \otimes B) = \sigma A \otimes \sigma B$.
    \begin{proof}
      Apply both sides to vectors and it reduces to the case for 
      vectors: 
      $\sigma (A\otimes B) (u\otimes v) = \sigma ((A\otimes B) 
      (\sigma^{-1} (u\otimes v))) = \sigma (A (\sigma^{-1}u) \otimes 
      B (\sigma^{-1}v)) = (\sigma A u)\otimes (\sigma B v) = ((\sigma 
      A) \otimes (\sigma B)) (u\otimes v)$.
    \end{proof}
  \end{lemma}

  \begin{lemma}
    $\delta : \mathbb{C}^{(2^\Var)} \rightarrow \mathbb{C}^{(2^\Var)}$ is 
    in $\Domain$ iff $\supp \delta$ is finite.
    \begin{proof}
      \fixme{See my notes on this paper.
        The case where $\#(\Var - V) = 2$ is solved.}
    \end{proof}
  \end{lemma}

  \begin{lemma}\label{thm:domain-equivariant}
    If $\delta \in \Domain$, then $\sigma \delta \in \Domain$.
    \begin{proof}
      $\supp (\sigma \delta) = \sigma (\supp \delta)$, which 
      preserves finite-ness.
    \end{proof}
  \end{lemma}
\fi

Now, let us turn to analyzing the properties of $(\obseq)$.
It is evidently an equivalence.
Being based on a weak bisimulation, it is not a congruence for the 
usual reason: it holds that $\tau.c!x.0 \obseq c!x.0$, yet 
$\tau.c!x.0 + d!x.0$ can be distinguished from $c!x.0 + d!x.0$ by the 
observer $c?x.!!x.0 + d?x.!!x.0$ over any store.


Observer equivalence is like a contextual equivalence in that it 
requires reasoning over all possible observers.
As usual, we define a bisimulation to help establish it without 
having to look at all contexts.

The idea is to exploit spatial compositionality and reason about just 
the subjects.
In observer equivalence, we had to guard against qubit naming 
inconsistencies in the subjects; however, now that we have access to 
the subjects, we can simply ask them to use equivariance 
(\cref{thm:equivariance}) to align their naming schemes before 
comparison, at least for the outbound qubits.
Thus, name conversions can be dropped without loss of generality.

\begin{definition}
  A (weak) \emph{subject bisimulation} is an equivalence relation $R$ 
  on well-formed configurations such that 
  $\st{P_1}{\rho_1} \mathbin{R} \st{P_2}{\rho_2}$ implies that:
  \begin{itemize}
  \item $\rho_1$ and $\rho_2$ are valid.
  \item If $\st{P_1}{\rho_1} \Sstep{a} \st{P_1'}{\rho_1'}$ then 
    $\st{P_2}{\rho_2} \Sstep{a} \st{P_2'}{\rho_2'}$ such that 
    $\st{P_1'}{\rho_1'} \mathbin{R} \st{P_2'}{\rho_2'}$.
  \end{itemize}
  Subject bisimilarity $(\subsim)$ is the largest subject 
  bisimulation.
  Process terms are \emph{subject equivalent}, written 
  $P_1 \subeq P_2$ iff 
  $\st{P_1}{\rho|_{\FV P_1}} \subsim \st{P_2}{\rho|_{\FV P_2}}$ for 
  any valid $\rho$ subject to well-formedness conditions.
\end{definition}

Subject bisimulation requires that action labels $a$ be matched 
exactly.
This is partly because we can assume names have been normalized, but 
more importantly because we want to avoid estimating the state of the 
observer.
If we want to collapse the gauge freedom of DH descriptors, we would 
track the qubits that the subject has sent out and do a 
current-density comparison.
But when a qubit comes back from the observer, \cref{thm:completion} 
does not give a unique completion, so we would have to perform 
current-density comparisons for all possible completions.
While this is probably possible to pull off, we believe the above 
formulation strikes a better balance between simplicity and 
expressivity.

\begin{proposition}
  If $P_1 \subeq P_2$ then $P_1 \obseq P_2$.
  \begin{proof}
    Given $P_1 \subeq P_2$, define a family of equivalences 
    $\sim_\Gamma$ by 
    $\st{Q_1,O_1}{\rho_1} \subsim_{\Gamma} \st{Q_2,O_2}{\rho_2}$ iff:
    \begin{enumerate}
    \item \label{enum:subsim:same-observer} $O_1 = O_2$ and 
      $\Gamma = \{\id \text{ on } \supp (\rho_1|_{\FV O_1})\}$.
    \item \label{enum:subsim:same-store} 
      $\rho_1 = \rho|_{\FV (Q_1\parcomp O_1)}$ and 
      $\rho_2 = \rho|_{\FV (Q_2\parcomp O_2)}$ for some $\rho$.
    \item \label{enum:subsim:transfer}
      $\st{Q_1}{\rho_1|_{\FV Q_1}} \subsim \st{Q_2}{\rho_2|_{\FV 
          Q_2}}$.
    \end{enumerate}
    We claim $\sim_\Gamma$ is an observer bisimulation.
    Assume $\st{Q_1,O}{\rho_1} \sim_\Gamma \st{Q_2,O}{\rho_2}$.
    Clearly the observer bisimulation conditions on $\Gamma$ are 
    satisfied, while the current-density check passes because 
    $\rho_1|_{\FV O} = \rho|_{\FV O} = \rho_2|_{\FV O}$ and the only 
    $\gamma \in \Gamma$ is the identity.
    The interesting part is the transfer property.
    It should be clear that every move on one side has a matching 
    move on the other.
    The question is whether that move preserves clauses 
    \cref{enum:subsim:same-observer,enum:subsim:transfer} of the 
    definition of $\sim_\Gamma$, justifying coinduction.
    \begin{itemize}
    \item If the observer moves unilaterally, then by spatial 
      compositionality, the subjects' part of the store is unchanged, 
      so the other side can match the move exactly by having the 
      observer (which identical on both sides) do the exact same 
      move.
      Then clause \cref{enum:subsim:transfer} is unaffected.
    \item If the subject moves unilaterally, the resulting subject 
      configurations remain subject-bisimilar, upholding clause 
      \cref{enum:subsim:transfer}.
      Spatial compositionality ensures the observer is unaffected, so 
      clauses \cref{enum:subsim:same-observer,enum:subsim:same-store} 
      are also preserved.
    \item If there is communication, a qubit moves from the 
      observer's part of the store to the subjects' or vice versa.
      But both are fragments of the same valid store, so they both 
      remain valid.
      The observers do not diverge in form because transition labels 
      of the subjects (especially the $x$ part of 
      $c![x \mapsto \delta]$) are matched exactly.
      Clauses \cref{enum:subsim:same-observer,enum:subsim:transfer} 
      are therefore preserved.
      \qedhere
    \end{itemize}
  \end{proof}
\end{proposition}

\noindent We now show that observer equivalence behaves as expected 
on simple examples.

\begin{example}
  Assume a recursive process defined as $\sink(x) := \sink(x)$, i.e.\ 
  an infinite loop that clings onto a qubit.
  \begin{itemize}
  \item $c!x.0 \not\obseq c!y.0$ because they are distinguished by 
    the observer $O := c?z. \sink(z)$ with initial store 
    $\rho := [x \mapsto (X_x, Z_x), y \mapsto (X_y, -Z_y)]$ 
    (corresponding to $x := \ket{0}$, $y := \ket{1}$).
    The observer can terminate the experiment any time after $c?z$ 
    receives the $x$ or $y$ qubit, upon which measurement will reveal 
    which one came through.
    Thus, $(\obseq)$ can distinguish qubits with different names.
  \item $!!x.0 \obseq !!y.0 \obseq \sink(x)$ because they are subject 
    equivalent.
    Thus, qubits which stay internal to a process is immaterial.
    \qedhere
  \end{itemize}
\end{example}

\begin{example}
  This example shows that DH descriptors do not suffer some of the 
  gauge freedom of ensembles---they are simply different types of 
  freedom.
  Let $Mx.P$ be an abbreviation for $??t.\CX(x \rightarrow t).!!t.P$, 
  i.e.\ measurement of $x$ retaining the measurement result.
  Define:
  \begin{equation*}
    \begin{array}{@{}l@{}l@{}}
      P_1 := H(x).\CX(x \rightarrow y).     &!!x.A(y) + B(y) \\
      P_2 := H(x).\CX(x \rightarrow y).H(x).&!!x.A(y) + B(y)
    \end{array}
    \hspace{1em}\text{where}\hspace{1em}
    \begin{array}{@{}l@{}l@{}}
      A(y) := &My.c!y.0 \\
      B(y) := H(y).&My.d!y.0
    \end{array}
  \end{equation*}
  Both $P_1$ and $P_2$ prepare a Bell state $\Phi^+$ and discard one 
  qubit, but $P_2$ performs Hadamard on it just before discarding.
  Then $A(y)$ measures $y$ in the Z axis while $B(y)$ measures in the 
  X axis, sending out the result over a different channel to indicate 
  the choice that was taken.
  These terms are subject bisimilar because whatever is done to qubit 
  $x$ before discarding has no bearing on the state of the remaining 
  qubit $y$, hence of anything that happens thereafter.

  The $P_1$ and $P_2$ in this example are physically 
  indistinguishable because they both set $y$ to the maximally mixed 
  state before entering $A(y) + B(y)$, though prepared as a different 
  ensemble.
  But the transfer property of bisimulation would force $P_2$ to 
  follow the scheduling choice of $A(y)$ vs.\ $B(y)$ in $P_1$.
  This observation of scheduling decisions could be leveraged to 
  distinguish $P_1$ from $P_2$, which motivated the observer 
  restrictions in \cite{CeragioliEtAl2024QuantumBisimilarityBarbs}.
  DH-CCS is insulated from this problem, although it might not 
  survive introduction of classical control and ensembles: DH 
  descriptors are modular, so they naturally model the irrelevance of 
  the Hadamard on the outgoing $x$ qubit.
\end{example}




\section{Case Study: BB84 Quantum Key Distribution}
\label{sec:example}

In this section, we demonstrate reasoning on an open system, where a 
qubit leaves, interacts with an external process, and re-enters.
We use a fragment of BB84 
\cite{BennettBrassard2014QuantumCryptographyPublic} as an example, 
but the point is not to verify the protocol.
It is to show the strength of spatial compositionality for open 
system modeling, while also highlighting a current limitation: 
descriptor inspection alone does not suffice for general 
information-flow analyses.

Let us briefly review the protocol.
Alice and Bob want to share a secret random bit string.
They share a public classical communication channel $c_c$, 
authenticated to reliably reach each other, and a public, untrusted 
quantum channel $c_q$.
A single round of the protocol goes as follows.
\vspace{0.4\baselineskip}
\begin{paracol}{2}
  \noindent \underline{Alice's moves}:
  \begin{enumeratealigned}
  \item Flip a classical coin $a$ to select a basis.
  \item Flip a classical coin $k$ to decide on the secret key bit.
  \item Encode $k$ in a qubit $k'$ by the basis she chose; send $k'$ 
    over $c_q$.
  \item Receive $b$ over the classical channel $c$.
  \item If $a \neq b$, abort.
    If not, randomly decide: accept $k$ as shared bit, or use $k$ to 
    detect eavesdropping.
    Synchronize with Bob on decision.
  \item If the decision was detection, send $k$ over $c$.
  \end{enumeratealigned}
  \switchcolumn
  \noindent \underline{Bob's moves}:
  \begin{enumeratealigned}
  \item Flip a classical coin $b$ to select a basis.
  \item Receive $\ell$ over the quantum channel $c_q$ and measure by 
    his selected basis.
  \item Send $b$ over the classical channel $c$.
  \item Synchronize with Alice on the abort-accept-detect decision.
  \item If the decision was detection, receive $k$ over $c$.
  \item Test whether $\ell = k$.
    If $\ell \neq k$, declare an eavesdropper is present.
  \end{enumeratealigned}
\end{paracol}

The communication through $c_q$ can be arbitrarily affected by an 
attacker Eve, so $\ell$ might not be $k'$.
Randomized detection forces Eve to ensure $\ell$ always measures the 
same as $k'$ to cover her tracks.
Randomized choice of measurement basis forces her to preserve the 
eigenstates of both Z-measurements and X-measurements, preserving the 
qubit entirely.
That requires remaining unentangled with it, so Eve's qubits cannot 
be correlated with $\ell$.
One can establish from this line of argumentation that, subject to 
$a = b$, one of the following holds:
\begin{itemize}
\item Detection: There is non-zero probability that Bob sees 
  $\ell \neq k'$ if they try detection, or
\item Security: No eavesdropper can produce a bit correlated with 
  $k$.
\end{itemize}
This is the simplest version of the security property of BB84, 
usually coupled with a correctness statement that if $a = b$ and no 
eavesdropper exists, $\ell = k$ with certainty.

We model this protocol in DH-CCS with a few simplifications to focus 
on the parts relevant to spatial compositionality, without ruining 
the overall shape of the protocol:
\begin{itemize}
\item We encode everything in qubits.
  Modeling the protocol faithfully would require constantly measuring 
  encoded classical bits, but we omit that.
\item We analyze just the cases where $a = b$ happened to hold.
  Concretely, we allow Alice and Bob to exchange a Bell pair at the 
  beginning of the protocol.
\item To faithfully model the public nature of $c_c$, Alice and Bob 
  should CC an outbound channel whenever they communicate over $c_c$, 
  but this is also omitted.
\item For the sake of presentation, it helps to talk about the set of 
  all qubits Alice and Bob touched, so we will have them not discard 
  any qubits.
\item In a real implementation, Alice and Bob will go on to do 
  something with the established key bit or the detection verdict.
  This continuation will be modeled by processes $K_A$, $K_A'$, 
  $K_B$, and $K_B'$, but for simplicity, we will pretend as if these 
  terms have no outgoing transitions.
\end{itemize}

With those simplifications, the protocol can be modeled as follows.
We write prefixes like $c!x.P$ as $[c!x]P$ for readability.
We write $[??x=y]P$ for copying allocation, i.e.\ 
$[??x][\CX(y \rightarrow x)]P$.
\begin{gather*}
  \begin{array}[t]{@{}l@{\hspace{1em}}}
    c_b : \text{quantum channel for exchanging Bell pair} \\
    c_c : \text{general-purpose classical channel} \\
    c_d : \text{channel for synchronizing on ``detect'' decision} \\
    c_q : \text{untrusted quantum channel} \\
    d : \text{dummy payload}
  \end{array}
  \begin{array}[t]{l@{}}
    a : \text{Alice's basis} \\
    b : \text{Bob's basis} \\
    k : \text{Alice's secret bit, sent out as }k' \\
    \ell : \text{Bob's tampered secret bit}  \\
    v : \text{comparison verdict }\ell \neq k
  \end{array}
  \\
  \begin{aligned}
    A &:= [H(a)][??b=a][c_b!b][H(k)][??k'=k][\CH(a \rightarrow k')][c_q!k']([c_c!d]K_A + [c_d!d][c_c!k]K_A') \\
    B &:= [c_b?b][q?\ell][\CH(b \rightarrow \ell)]([c_c?d]K_B + [c_d?d][c_c?k][??v=k][\CX(\ell \rightarrow v)]K_B') \\
  \end{aligned}
\end{gather*}

These processes are meant to be run as 
$(A \parcomp B) \setminus \{c_b, c_c, c_d\}$.
We will omit the $\setminus \{c_b, c_c, c_d\}$ part for brevity, 
while keeping in mind that communication on those channels do not 
propagate out.

The advantage of spatial compositionality is that we can model Alice 
and Bob as an open system that does not explicitly include the 
attacker.
The shape of the most interesting Alice-Bob evolution to analyze is, 
of course, the one where qubit $k'$ leaves the duo temporarily:
\begin{equation}\label{eq:bb84-reductions}
  \st{A\parcomp B}{\rho_0(\FV (A\parcomp B))} \Sstep{c_q!q_{k'}} 
  \st{A_1 \parcomp B_1}{\rho_1} \Sstep{c_q?q_\ell} 
  \st{A_2 \parcomp B_2}{\rho_2} \Sstep{\tau} \st{K_A \parcomp K_B}{\rho}
\end{equation}
where $\dom q_\ell = \{\ell\}$ and $q_{k'} = \{k'\}$.
(Showing the case where Alice and Bob decide to accept.)
The qubit $q_{k'}$ is intercepted by an external Eve, who may return 
the qubit without touching it, perturb and return it, or return an 
entirely unrelated qubit.
Either way, the configuration $\st{A_1 \parcomp B_1}{\rho_1}$ is one 
that cannot be adequately modeled by density matrices because, when 
the returned qubit $q_\ell$ does turn out to be the same one that 
left, the configuration must be restored to a pure one in order to 
keep reasoning about the evolution thereafter.

And we can indeed reason about the remaining evolution!
In fact, if we focus on the happy case where the attacker only 
modifies $q_{k'}$ in a way that leaves no trace of external qubits in 
its descriptor, the Security clause above is rather simple to prove.


\begin{proposition}\label{thm:bb84-simple}
  In \cref{eq:bb84-reductions}, let $\st{A_1 \parcomp B_1}{\rho_1}$ 
  be the last configuration before the $c_q?q_\ell$ action and assume 
  $\rho_1 \cup q_\ell$ is valid.
  If $\ell = k'$ and 
  $\supp q_\ell \subseteq \dom \rho_1 \cup \{k'\}$, then for any 
  valid full extension $\mu \supseteq \rho$, no qubit 
  $g \in \dom \mu - \dom \rho$ can be correlated with $k$ when 
  measured in the Z axis.
\end{proposition}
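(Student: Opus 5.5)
The plan is to reduce the statement to a commutant argument: after the qubit returns, the Alice--Bob store $\rho$ is \emph{full}. Any outside qubit $g$ of a valid full extension $\mu$ must then have descriptors acting only on variables outside $\supp\rho$. The expectation value against $\ket{0^{\supp\mu}}$ then factorizes, which means there is no Z--Z correlation with $k$.

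\textbf{Step 1: $\rho$ is full.} Initially $\rho_0(\FV(A\parcomp B))$ is full. Every rule used by $A\parcomp B$ before $c_q!q_{k'}$ preserves $\supp\subseteq\dom$: unitaries on names in the store, allocations with $\rho_0(x)$, and internal communications on $c_b$, which are unions of fragments. For unitaries this holds because $f_{U(\tilde x)}(\rho)$ is a multilinear expression in descriptors already supported in $\dom\rho$. The send removes only $k'$ from the domain, so $\supp\rho_1\subseteq\dom\rho_1\cup\{k'\}$. By hypothesis $\ell=k'$ and $\supp q_\ell\subseteq\dom\rho_1\cup\{k'\}$, so $\rho_1\cup q_\ell$ is full. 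The remaining $\Sstep{\tau}$ steps to $\st{K_A\parcomp K_B}{\rho}$ are again internal unitaries, allocations and internal communications, so the same invariant gives that $\rho$ is full with $W:=\dom\rho=\supp\rho$. It is valid because validity is preserved by unitary update and allocation, starting from the valid $\rho_1\cup q_\ell$.

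\textbf{Step 2: the descriptors of $\rho$ generate everything on $W$.} By Descriptor Completion (\cref{thm:completion}) with $V=W$, we get $\rho=U^\dagger\rho_0(W)U$ for a unitary $U$ on $\mathbb{C}^{2^W}$. The $X_x,Z_x$ ($x\in W$) generate $\mathbb{C}^{2^W\times 2^W}$, so the $\rho(x)_X,\rho(x)_Z$ generate $U^\dagger\mathbb{C}^{2^W\times2^W}U=\mathbb{C}^{2^W\times 2^W}$, viewed inside $\Domain$ via $\iota_W$.

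\textbf{Step 3: outside qubits live on the complement.} Let $\mu\supseteq\rho$ be valid and full, with $W':=\supp\mu$, and take $g\in\dom\mu-\dom\rho$. The Pauli equations say $\mu(g)_X,\mu(g)_Z$ commute with every $\rho(x)_\ell$. By Step 2 they therefore commute with all of $\iota_W(\mathbb{C}^{2^W\times2^W})$. Writing $\mathbb{C}^{2^{W'}}=\mathbb{C}^{2^W}\otimes\mathbb{C}^{2^{W'-W}}$, the commutant of $M_{2^W}\otimes 1$ is $1\otimes M_{2^{W'-W}}$. Hence $\mu(g)_Z=1\otimes B$, while $\mu(k)_Z=\rho(k)_Z=A\otimes 1$.

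\textbf{Step 4: no correlation.} Measuring in the Z axis, the joint statistics of $k,g$ are determined by $\langle 0^{W'}|\mu(k)_Z^{a}\mu(g)_Z^{b}|0^{W'}\rangle$, as in \cref{eq:nominal-dh-density}. Since $\ket{0^{W'}}=\ket{0^W}\otimes\ket{0^{W'-W}}$, the mixed term $\langle A\otimes B\rangle$ equals $\langle A\rangle\langle B\rangle$. So the reduced density matrix on $\{k,g\}$ is diagonal-part-product in Z, which means the Z outcomes are independent.

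\textbf{Expected obstacle.} I expect the main obstacle to be Step 1's bookkeeping: checking that every intermediate rule, including the possibly non-accepting branch and any internal $c_b$ traffic interleaved after $c_q?q_\ell$, keeps $\supp\subseteq\dom$. I would also need to take care that "correlated" is formalized via the Z-diagonal of $\density(\mu|_{\{k,g\}})$. Step 3 is standard linear algebra (the double commutant of a full matrix factor).
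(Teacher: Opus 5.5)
Your proposal is correct and has the same skeleton as the paper's proof. You track supports through \eqref{eq:bb84-reductions} to see that $\rho_1\cup q_\ell$, and hence $\rho$, is full. You then show that the descriptors of any $g \in \dom\mu - \dom\rho$ act trivially on $\dom\rho$, and conclude that the $Z$-statistics factor.

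The difference lies in the middle step. The paper cites \cref{thm:mention-graph-no-sink}, but that lemma as stated only forbids a \emph{single} node from being a sink. Here the closed set is all of $\dom\rho$, whose members do mention one another (e.g.\ $k$ and $k'$ after the copying allocation). What is actually needed is the set-level version of that lemma's completion-plus-commutation argument, and your Steps 2--3 supply exactly this. Completion of the full store gives $\rho = U^\dagger\rho_0(W)U$ with $U$ acting on $\mathbb{C}^{2^W}$. So the descriptors of $\rho$ generate $\iota_W(\mathbb{C}^{2^W\times 2^W})$, whose commutant is $1\otimes\mathbb{C}^{2^{W'-W}\times 2^{W'-W}}$.

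Your route also gives the right factorization criterion. The split $\mu(k)_Z = A\otimes 1$, $\mu(g)_Z = 1\otimes B$ across $W$ and $W'-W$ is what makes $\langle 0^{W'}|\mu(k)_Z\,\mu(g)_Z|0^{W'}\rangle$ factor. The paper's stated sufficient condition (``$\mu(g)_Z$ acts trivially on $k$ and vice versa'') would not suffice by itself. A GHZ-type preparation gives $\mu(k)_Z = X_hZ_k$ and $\mu(g)_Z = X_hZ_g$, which are perfectly correlated through a shared $h$. The paper's argument does in fact deliver disjoint supports, so its conclusion stands.

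The paper's version buys brevity and the intuitive mention-graph picture; yours is self-contained and proves precisely the statement being used. As a small simplification, validity of $\rho$ is immediate from $\rho = \mu|_{\dom\rho}$ with $\mu$ valid. You therefore need not propagate validity through the unitary and allocation steps.
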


In other words, if the same $k'$ came back in $q_\ell$, Eve can't 
possess a correlated qubit.
The extension $\mu$, being arbitrary, can contain everything Eve has; 
yet, we can't find a correlated $g$ in it, except in $\rho$ which 
Alice and Bob own (recall they don't discard, so $\rho$ contains 
every qubit they touched).
The requirement 
$\ell = k' \land \supp q_\ell \subseteq \dom \rho_1 \cup \{k'\}$ is 
fairly strong, but not trivial: for example, Eve could have performed 
any single-qubit unitary on $k'$.
What she could not have done is a CX or anything that might entangle 
it with her other qubits.

To prove the proposition, we note that if Eve's qubits didn't spill 
into $\ell$'s descriptors, then Alice and Bob's qubits form a closed 
system unentangled with the outside.
To formalize this argument, we will gather the mutual reference 
structure of qubits in a graph.

\begin{definition}
  Given a store $\rho$, define its \emph{mention graph} to have 
  $\supp \rho$ as the set of nodes and have edges $x \rightarrow y$ 
  iff $x \neq y$ and $y \in \supp \rho(x)$.
\end{definition}

\begin{lemma}\label{thm:mention-graph-no-sink}
  The mention graph of a valid full store $\rho$ has no sinks, i.e.\ 
  nodes where some edges come in but none go out.
  \begin{proof}
    Let $y$ have an edge to a sink $x$.
    Then $\rho|_{\{x\}}$ is itself a valid full store, so by 
    \cref{thm:completion}, 
    $\rho|_{\{x\}} = U^\dagger \rho_0(x) U$ where $U$ acts 
    trivially on every qubit except $x$, i.e.\ $U = \iota_{\{x\}}U'$ 
    for some unitary $U'$.
    Each $\rho(y)_j$ ($j \in \{X,Z\}$) commutes with $\rho(x)_X$ and 
    $\rho(x)_Z$, so $A := U \rho(y)_j U^\dagger$ commutes with $X_x$ 
    and $Z_x$.
    We can expand out 
    $A = a\otimes 1 + b\otimes X_x + c\otimes Z_x + d\otimes X_z Z_x$ 
    where $a,b,c,d$ can be written in those $X_z$ and $Z_z$ with 
    $z \neq x$.
    Commutation with $X_x$, $Z_x$ forces $b = c = d = 0$, so 
    $A = a \otimes 1$, which means 
    $\rho(y)_j = U^\dagger (a \otimes 1) U = a \otimes ({U'}^\dagger 
    1 U') = a\otimes 1$; therefore $x \notin \supp \rho(y)_j$.
  \end{proof}
\end{lemma}

\begin{proof}[Proof of \cref{thm:bb84-simple}]
  Let $\ket{\mu}$ be the state encoded by $\mu$.
  Recall that $Z$-measurements on $g$, $k$ are uncorrelated on $\mu$ 
  iff 
  $\bracket{\mu}{Z_gZ_k}{\mu} = \bracket{\mu}{Z_g}{\mu} 
  \bracket{\mu}{Z_k}{\mu} = 
  \bracket{0^W}{\mu(g)_Z}{0^W}\bracket{0^W}{\mu(x)_Z}{0^W}$.
  For this, it suffices to show $\mu(g)_Z$ acts trivially on $k$ and 
  vice versa.
  The assumption $q_\ell \subseteq \dom \rho_1\cup \{k'\}$ ensures 
  that merging $q_\ell$ into $\rho_1$ does not introduce any 
  variables other than $k'$.
  Because it is the last store before $q_\ell$ is incorporated, 
  $\rho_1 \cup q_\ell$ is the store right after receiving 
  $c_q?q_\ell$.
  But $\supp \rho_1 - \dom \rho_1 = \{k'\}$, because $k'$ is the only 
  qubit that left the Alice-Bob system up to that point.
  Therefore, Alice and Bob's joint store becomes full after that 
  receive operation, and fullness is preserved by subsequent 
  evolution, all the way up to $\rho$.
  This means that, by \cref{thm:mention-graph-no-sink}, the 
  descriptors in $\mu - \rho$ cannot reference variables in 
  $\dom \rho$.
  Hence $\mu(g)_Z$ acts trivially on $k$, and $\mu(k)_Z$ acts 
  trivially on $g$.
\end{proof}

The argumentation above demonstrates a reasoning pattern unique to 
DH-CCS: it can tell when absence of entanglement is \emph{restored} 
by external communication.

BB84 also exposes a current limitation with reasoning in DH-CCS.
If we try to reason about the case where $q_\ell$ does mention 
external qubits, we hit against the fact that qubits can mention 
others without being entangled.
For example, $\CX(x \rightarrow y)$ takes $\rho_0(x,y)$ to 
$[x \mapsto (X_x X_y,Z_x), y \mapsto (X_y, Z_x Z_y)]$, but CX on the 
initial state $\ket{00}$ should be a no-op.
Instead, $x$ and $y$ mention each other.

The problem here is that DH descriptors capture the evolution of the 
$\Obs$ part of $\langle 0^V|\Obs|0^V \rangle$ detached from the 
$\langle 0^V| \_ | 0^V \rangle$ part.
The map $q \mapsto U_{\CX}^\dagger q U_\CX$ is not a no-op, because 
what if the result was applied to something other than $|0^V\rangle$?
Just in case the tracked observable is used with a different initial 
state, DH descriptors record any operations that \emph{might} create 
entanglement as mutual mentions.
Thus the mentions graph is an over-approximation of entanglement 
structure.
The Bell state example from \cref{sec:bisimulation} is similar: 
$H(x).\CX(x \rightarrow y)$ and $H(y).\CX(y \rightarrow x)$ map 
$\rho_0(x,y)$ differently because these operations would diverge if 
the initial state were $\ket{0+}$.

Put differently, local descriptor inspection can establish absence of 
information leakage in open systems, but not presence.
This is a limitation coming from our lack of understanding about how 
to properly collapse the gauge freedom of DH descriptors.
It seems accounting for the initial state $|0^V\rangle$ should give a 
significant collapse, but what kind of equalities that induces on 
descriptors, or whether that takes care of all unwanted degrees of 
freedom, is currently unknown.

For now, the need for accurate accounting of information flow must be 
met by collapsing DH descriptors to density matrices.
The proof strategy for full BB84 would be to convert stores to 
density matrices and apply the standard security argument like we 
sketched at the beginning of this section.
The utility of DH descriptors is then to track the movements of 
qubits across process and system boundaries.
Information flow is monitored by current-density arguments, just like 
with program equivalence.

\section{Related Works}
\label{sec:related-works}

Deutsch-Hayden descriptors were introduced by the namesake authors in 
\cite{DeutschHayden2000InformationFlowEntangled}.
Their purpose was to show quantum mechanics to be \emph{local}, that 
quantum information is localized on each qubit and operations on one 
qubit has no effect on another until the qubits are made to interact.
Horsman and Vedral \cite{HorsmanVedral2007DevelopingDeutschHayden} 
developed this further to handle relative states and partial 
descriptor sets like we use here.
Raymond-Robichaud 
\cite{Raymond-Robichaud2017LequivalenceEntreLocalrealisme} has 
proposed alternative local theories, but these were later shown to be 
equivalent to DH \cite{Bedard2021CostQuantumLocality}.

There is some debate on whether DH actually demonstrated locality.
Timpson \cite{Timpson2005NonlocalityInformationFlow} counters, e.g.\ 
that the initial state $\ket{0^n}$ implicit in an $n$-qubit DH 
descriptor serves as a global connection.
Wallace and Timpson \cite{WallaceTimpson2007NonlocalityGaugeFreedom} 
go further and claim that DH's gauge freedom shows that locality is a 
mirage, living only in the unphysical degrees of freedom.
They base this assertion on the belief that gauge freedom should be 
collapsed until the resulting description coincides with density 
matrices, since the latter encodes all measurable quantities.
But they only consider what amounts to current-density observation.
Viewing observation as a continuous monitoring process makes part of 
this freedom semantically significant (\cref{sec:bisimulation}).

In our setting, the most important feature of DH descriptors is that 
they provide modular accounting of quantum information, and whether 
nature works that way is a moot point (as interesting a question as 
that may be).
As noted by B\'{e}dard \cite{Bedard2021ABCDeutschHayden}, it appears 
to be still widely believed that entanglement makes quantum states 
inherently non-modular---a myth that B\'{e}dard sought to dispel.
This paper doubles down on this idea by adopting a modular state 
representation and demonstrating a kind of compositionality hitherto 
unavailable to quantum process calculi.

A number of quantum process calculi have been proposed over the 
years.
The earliest efforts such as QPAlg 
\cite{LalireJorrand2004ProcessAlgebraicApproach} and CQP 
\cite{GayNagarajan2005CommunicatingQuantumProcesses} strove to unify 
quantum and classical information processing in one framework.
Later calculi such as qCCS 
\cite{FengEtAl2007ProbabilisticBisimulationsQuantum} extended the 
modeled operations while simplifying and strengthening the mechanisms 
that ensure physical realizability.
Our calculus's design is greatly indebted to qCCS, along with lqCCS 
\cite{CeragioliEtAl2024QuantumBisimilarityBarbs}.
A prime novelty of qCCS was the handling of ``input and output of 
quantum systems which are possibly [entangled]'' 
\cite{FengEtAl2007ProbabilisticBisimulationsQuantum}, but this only 
applied to communication between in-model processes or with 
unentangled external processes.
We expand on this capability by adding communication with external 
entangled entities.

Despite substantial differences in syntax, communication discipline, 
typing discipline, and equivalence notions, one feature is remarkably 
consistent throughout these calculi, which is the monolithic 
representation of quantum state.
All qubits are put in a single, global state vector or density 
matrix, and configurations are those along with process terms; then a 
transition system is defined on ensembles of those.
Communication transfers only qubit names, which cannot be interpreted 
independently of the global quantum state.
(An exception is qCCS's external I/O, which requires unentangled 
input sources and traces out output qubits.)
DH-CCS instead employs a state representation where individual qubits 
can be pushed around as standalone entities, which leads to spatial 
compositionality and open-system modeling.

Beyond the original calculi themselves, considerable effort has gone 
into defining behavioral equivalences, including observational 
equivalence \cite{YasudaEtAl2014ObservationalEquivalenceUsing}, open 
and ground bisimulations 
\cite{QinEtAl2020VerifyingQuantumCommunication}, branching 
bisimulation \cite{WuEtAl2024BranchingBisimulationSemantics}, and 
effect-based semantics 
\cite{CeragioliEtAl2024EffectSemanticsQuantum}.
A scathing critique came from Ceragioli et al., who pointed out that 
``the soundness of [the observable properties proposed in earlier 
calculi] has never been validated against the prescriptions of 
quantum theory'' \cite{CeragioliEtAl2024QuantumBisimilarityBarbs}.
They show a specific pathology with ensemble semantics and propose a 
tightened notion of process equivalence.

Partly in a nod to this critique, we have thoroughly grounded our 
derivation of observer equivalence on a detailed rationale about what 
observations should be physically possible, and what might perturb 
it.
We of course do not claim to have the final word on this topic, but 
we hope to at least contribute to a thorougher understanding of 
quantum process equivalence by providing a design that is easy to 
scrutinize and perhaps generalizes to other calculi.

\section{Methods}

Interactions with MS Copilot and ChatGPT influenced the formulation 
of $\Domain$, spatial compositionality, and bisimulation.
Copilot generated the proofs of 
\cref{thm:completion,thm:mention-graph-no-sink}, as well as a program 
to check DH descriptor calculations.
Both were used to survey related works.
We have verified all AI-generated materials and assume full 
responsibility for their correctness and relevance.

\section{Conclusion}
\label{sec:conclusion}

This paper established DH-CCS, a quantum process calculus based on 
the Deutsch-Hayden descriptor formulation of quantum mechanics.
Its distinguishing feature is a quantum state representation that can 
be split and merged along qubit boundaries regardless of 
entanglement.
The resulting calculus is spatially compositional, allowing 
individual subprocesses' execution to be tracked independently, in a 
way that can be put back together to recover the global execution 
result.
The upshot is a calculus capable of modeling fully open systems, 
where qubits can enter, leave, and come back, all without losing 
entanglement information.
It explicitly unifies allocation, discarding, measurement (though not 
classical branching) as communication with an external entity.

We have defined a notion of process equivalence with the explicit 
goal of taming the gauge (extra) freedom of DH descriptors, equipped 
with an approximating bisimulation.
The BB84 study suggests that DH descriptors are well-suited to 
tracking movement of quantum data, whereas reasoning about 
information leakage still requires collapsing to density matrices.

There are two particularly interesting directions for future work.
One is whether classical control can be incorporated without 
introducing ensembles (i.e.\ classical probability distributions).
The most difficult question there would be what to do about ownership 
predicated on the value of a qubit.
Perhaps a semantics that allows the process term to be in quantum 
superposition may work.

The other interesting question is whether we can characterize and 
collapse precisely the unobservable gauge freedom of DH descriptors.
We can't help but feel that we've seen this movie before in 
programming language theory.
When Plotkin discovered the mismatch between observable behaviors of 
sequential programs and Scott denotations 
\cite{Plotkin1977LCFConsideredProgramming}, the gap was first bridged 
by a syntactic model \cite{Milner1977FullyAbstractModels}, then 
eventually game semantics 
\cite{AbramskyEtAl2000FullAbstractionPCF,HylandOng2000FullAbstractionPCF}.
The difference is that gauge freedom of DH descriptors have been 
well-known for a while, but a characterization of the gap like 
parallel-or does not appear to be forthcoming; so, there is no 
Plotkin in this movie.

We could not find in the DH descriptor literature a good 
characterization separating the degrees of freedom that matter from 
those that do not.
Can our continuous monitoring perspective serve as such a 
characterization?
Our historical semantics is a kind of syntactic model, but it's 
infinitely redundant.
Can we quotient it while accounting for the distributed nature of DH 
descriptors?
Can something like game semantics provide a syntax-free model that 
collapses the freedom exactly?
If the characterization and collapse can be successfully made, that 
could shed a new light on the concept of locality and the nature of 
quantum information flow in a distributed system.


\bibliographystyle{ACM-Reference-Format}
\bibliography{local}


\begin{thebibliography}{24}


\ifx \showCODEN    \undefined \def \showCODEN     #1{\unskip}     \fi
\ifx \showISBNx    \undefined \def \showISBNx     #1{\unskip}     \fi
\ifx \showISBNxiii \undefined \def \showISBNxiii  #1{\unskip}     \fi
\ifx \showISSN     \undefined \def \showISSN      #1{\unskip}     \fi
\ifx \showLCCN     \undefined \def \showLCCN      #1{\unskip}     \fi
\ifx \shownote     \undefined \def \shownote      #1{#1}          \fi
\ifx \showarticletitle \undefined \def \showarticletitle #1{#1}   \fi
\ifx \showURL      \undefined \def \showURL       {\relax}        \fi
\providecommand\bibfield[2]{#2}
\providecommand\bibinfo[2]{#2}
\providecommand\natexlab[1]{#1}
\providecommand\showeprint[2][]{arXiv:#2}

\bibitem[Abramsky et~al\mbox{.}(2000)]%
        {AbramskyEtAl2000FullAbstractionPCF}
\bibfield{author}{\bibinfo{person}{Samson Abramsky}, \bibinfo{person}{Radha
  Jagadeesan}, {and} \bibinfo{person}{Pasquale Malacaria}.}
  \bibinfo{year}{2000}\natexlab{}.
\newblock \showarticletitle{Full {{Abstraction}} for {{PCF}}}.
\newblock \bibinfo{journal}{\emph{Information and Computation}}
  \bibinfo{volume}{163}, \bibinfo{number}{2} (\bibinfo{date}{Dec.}
  \bibinfo{year}{2000}), \bibinfo{pages}{409--470}.
\newblock
\showISSN{0890-5401}
\href{https://doi.org/10.1006/inco.2000.2930}{doi:\nolinkurl{10.1006/inco.2000.2930}}


\bibitem[B{\'e}dard(2021a)]%
        {Bedard2021ABCDeutschHayden}
\bibfield{author}{\bibinfo{person}{Charles~Alexandre B{\'e}dard}.}
  \bibinfo{year}{2021}\natexlab{a}.
\newblock \showarticletitle{The {{ABC}} of {{Deutsch}}--{{Hayden
  Descriptors}}}.
\newblock \bibinfo{journal}{\emph{Quantum Reports}} \bibinfo{volume}{3},
  \bibinfo{number}{2} (\bibinfo{date}{June} \bibinfo{year}{2021}),
  \bibinfo{pages}{272--285}.
\newblock
\showISSN{2624-960X}
\href{https://doi.org/10.3390/quantum3020017}{doi:\nolinkurl{10.3390/quantum3020017}}


\bibitem[B{\'e}dard(2021b)]%
        {Bedard2021CostQuantumLocality}
\bibfield{author}{\bibinfo{person}{C.~A. B{\'e}dard}.}
  \bibinfo{year}{2021}\natexlab{b}.
\newblock \showarticletitle{The Cost of Quantum Locality}.
\newblock \bibinfo{journal}{\emph{Proceedings of the Royal Society A:
  Mathematical, Physical and Engineering Sciences}} \bibinfo{volume}{477},
  \bibinfo{number}{2246} (\bibinfo{date}{Feb.} \bibinfo{year}{2021}),
  \bibinfo{pages}{20200602}.
\newblock
\showISSN{1364-5021}
\href{https://doi.org/10.1098/rspa.2020.0602}{doi:\nolinkurl{10.1098/rspa.2020.0602}}


\bibitem[Bennett and Brassard(2014)]%
        {BennettBrassard2014QuantumCryptographyPublic}
\bibfield{author}{\bibinfo{person}{Charles~H. Bennett} {and}
  \bibinfo{person}{Gilles Brassard}.} \bibinfo{year}{2014}\natexlab{}.
\newblock \showarticletitle{Quantum Cryptography: {{Public}} Key Distribution
  and Coin Tossing}.
\newblock \bibinfo{journal}{\emph{Theoretical Computer Science}}
  \bibinfo{volume}{560} (\bibinfo{date}{Dec.} \bibinfo{year}{2014}),
  \bibinfo{pages}{7--11}.
\newblock
\showISSN{03043975}
\showeprint[arxiv]{2003.06557}~[quant-ph]
\href{https://doi.org/10.1016/j.tcs.2014.05.025}{doi:\nolinkurl{10.1016/j.tcs.2014.05.025}}


\bibitem[Ceragioli et~al\mbox{.}(2024a)]%
        {CeragioliEtAl2024EffectSemanticsQuantum}
\bibfield{author}{\bibinfo{person}{Lorenzo Ceragioli}, \bibinfo{person}{Fabio
  Gadducci}, \bibinfo{person}{Giuseppe Lomurno}, {and}
  \bibinfo{person}{Gabriele Tedeschi}.} \bibinfo{year}{2024}\natexlab{a}.
\newblock \showarticletitle{Effect {{Semantics}} for {{Quantum Process
  Calculi}}}. In \bibinfo{booktitle}{\emph{35th {{International Conference}} on
  {{Concurrency Theory}} ({{CONCUR}} 2024)}} \emph{(\bibinfo{series}{Leibniz
  {{International Proceedings}} in {{Informatics}} ({{LIPIcs}})},
  Vol.~\bibinfo{volume}{311})}, \bibfield{editor}{\bibinfo{person}{Rupak
  Majumdar} {and} \bibinfo{person}{Alexandra Silva}} (Eds.).
  \bibinfo{publisher}{Schloss Dagstuhl -- Leibniz-Zentrum f\"ur Informatik},
  \bibinfo{address}{Dagstuhl, Germany}, \bibinfo{pages}{16:1--16:22}.
\newblock
\showISBNx{978-3-95977-339-3}
\showISSN{1868-8969}
\href{https://doi.org/10.4230/LIPIcs.CONCUR.2024.16}{doi:\nolinkurl{10.4230/LIPIcs.CONCUR.2024.16}}


\bibitem[Ceragioli et~al\mbox{.}(2024b)]%
        {CeragioliEtAl2024QuantumBisimilarityBarbs}
\bibfield{author}{\bibinfo{person}{Lorenzo Ceragioli}, \bibinfo{person}{Fabio
  Gadducci}, \bibinfo{person}{Giuseppe Lomurno}, {and}
  \bibinfo{person}{Gabriele Tedeschi}.} \bibinfo{year}{2024}\natexlab{b}.
\newblock \showarticletitle{Quantum {{Bisimilarity}} via {{Barbs}} and
  {{Contexts}}: {{Curbing}} the {{Power}} of {{Non-deterministic Observers}}}.
\newblock \bibinfo{journal}{\emph{Proc. ACM Program. Lang.}}
  \bibinfo{volume}{8}, \bibinfo{number}{POPL} (\bibinfo{date}{Jan.}
  \bibinfo{year}{2024}), \bibinfo{pages}{43:1269--43:1297}.
\newblock
\href{https://doi.org/10.1145/3632885}{doi:\nolinkurl{10.1145/3632885}}


\bibitem[Deutsch and Hayden(2000)]%
        {DeutschHayden2000InformationFlowEntangled}
\bibfield{author}{\bibinfo{person}{David Deutsch} {and}
  \bibinfo{person}{Patrick Hayden}.} \bibinfo{year}{2000}\natexlab{}.
\newblock \showarticletitle{Information Flow in Entangled Quantum Systems}.
\newblock \bibinfo{journal}{\emph{Proceedings of the Royal Society A:
  Mathematical, Physical and Engineering Sciences}} \bibinfo{volume}{456},
  \bibinfo{number}{1999} (\bibinfo{date}{July} \bibinfo{year}{2000}),
  \bibinfo{pages}{1759--1774}.
\newblock
\showISSN{1364-5021}
\href{https://doi.org/10.1098/rspa.2000.0585}{doi:\nolinkurl{10.1098/rspa.2000.0585}}


\bibitem[Everett(1957)]%
        {Everett1957RelativeStateFormulation}
\bibfield{author}{\bibinfo{person}{Hugh Everett}.}
  \bibinfo{year}{1957}\natexlab{}.
\newblock \showarticletitle{"{{Relative State}}" {{Formulation}} of {{Quantum
  Mechanics}}}.
\newblock \bibinfo{journal}{\emph{Reviews of Modern Physics}}
  \bibinfo{volume}{29}, \bibinfo{number}{3} (\bibinfo{year}{1957}),
  \bibinfo{pages}{454--462}.
\newblock
\href{https://doi.org/10.1103/RevModPhys.29.454}{doi:\nolinkurl{10.1103/RevModPhys.29.454}}


\bibitem[Feng et~al\mbox{.}(2007)]%
        {FengEtAl2007ProbabilisticBisimulationsQuantum}
\bibfield{author}{\bibinfo{person}{Yuan Feng}, \bibinfo{person}{Runyao Duan},
  \bibinfo{person}{Zhengfeng Ji}, {and} \bibinfo{person}{Mingsheng Ying}.}
  \bibinfo{year}{2007}\natexlab{}.
\newblock \showarticletitle{Probabilistic Bisimulations for Quantum Processes}.
\newblock \bibinfo{journal}{\emph{Information and Computation}}
  \bibinfo{volume}{205}, \bibinfo{number}{11} (\bibinfo{date}{Nov.}
  \bibinfo{year}{2007}), \bibinfo{pages}{1608--1639}.
\newblock
\showISSN{0890-5401}
\href{https://doi.org/10.1016/j.ic.2007.08.001}{doi:\nolinkurl{10.1016/j.ic.2007.08.001}}


\bibitem[Gabbay and Pitts(2002)]%
        {GabbayPitts2002NewApproachAbstract}
\bibfield{author}{\bibinfo{person}{Murdoch~J. Gabbay} {and}
  \bibinfo{person}{Andrew~M. Pitts}.} \bibinfo{year}{2002}\natexlab{}.
\newblock \showarticletitle{A {{New Approach}} to {{Abstract Syntax}} with
  {{Variable Binding}}}.
\newblock \bibinfo{journal}{\emph{Formal Aspects of Computing}}
  \bibinfo{volume}{13}, \bibinfo{number}{3-5} (\bibinfo{date}{July}
  \bibinfo{year}{2002}), \bibinfo{pages}{341--363}.
\newblock
\showISSN{0934-5043}
\href{https://doi.org/10.1007/s001650200016}{doi:\nolinkurl{10.1007/s001650200016}}


\bibitem[Gay and Nagarajan(2005)]%
        {GayNagarajan2005CommunicatingQuantumProcesses}
\bibfield{author}{\bibinfo{person}{Simon~J. Gay} {and}
  \bibinfo{person}{Rajagopal Nagarajan}.} \bibinfo{year}{2005}\natexlab{}.
\newblock \showarticletitle{Communicating Quantum Processes}. In
  \bibinfo{booktitle}{\emph{Proceedings of the 32nd {{ACM SIGPLAN-SIGACT}}
  Symposium on {{Principles}} of Programming Languages}}
  \emph{(\bibinfo{series}{{{POPL}} '05})}. \bibinfo{publisher}{Association for
  Computing Machinery}, \bibinfo{address}{New York, NY, USA},
  \bibinfo{pages}{145--157}.
\newblock
\showISBNx{978-1-58113-830-6}
\href{https://doi.org/10.1145/1040305.1040318}{doi:\nolinkurl{10.1145/1040305.1040318}}


\bibitem[Horsman and Vedral(2007)]%
        {HorsmanVedral2007DevelopingDeutschHayden}
\bibfield{author}{\bibinfo{person}{Dominic Horsman} {and} \bibinfo{person}{V
  Vedral}.} \bibinfo{year}{2007}\natexlab{}.
\newblock \showarticletitle{Developing the {{Deutsch}}--{{Hayden}} Approach to
  Quantum Mechanics}.
\newblock \bibinfo{journal}{\emph{New Journal of Physics}} \bibinfo{volume}{9},
  \bibinfo{number}{5} (\bibinfo{date}{May} \bibinfo{year}{2007}),
  \bibinfo{pages}{135}.
\newblock
\showISSN{1367-2630}
\href{https://doi.org/10.1088/1367-2630/9/5/135}{doi:\nolinkurl{10.1088/1367-2630/9/5/135}}


\bibitem[Hyland and Ong(2000)]%
        {HylandOng2000FullAbstractionPCF}
\bibfield{author}{\bibinfo{person}{J.~M.~E. Hyland} {and}
  \bibinfo{person}{C.~H.~L. Ong}.} \bibinfo{year}{2000}\natexlab{}.
\newblock \showarticletitle{On {{Full Abstraction}} for {{PCF}}: {{I}}, {{II}},
  and {{III}}}.
\newblock \bibinfo{journal}{\emph{Information and Computation}}
  \bibinfo{volume}{163}, \bibinfo{number}{2} (\bibinfo{date}{Dec.}
  \bibinfo{year}{2000}), \bibinfo{pages}{285--408}.
\newblock
\showISSN{0890-5401}
\href{https://doi.org/10.1006/inco.2000.2917}{doi:\nolinkurl{10.1006/inco.2000.2917}}


\bibitem[Jacobson(2009)]%
        {Jacobson2009BasicAlgebraII}
\bibfield{author}{\bibinfo{person}{Nathan Jacobson}.}
  \bibinfo{year}{2009}\natexlab{}.
\newblock \bibinfo{booktitle}{\emph{Basic {{Algebra II}}: {{Second Edition}}}}.
\newblock \bibinfo{publisher}{Dover Publications}, \bibinfo{address}{Mineola,
  NY}.
\newblock
\showISBNx{978-0-486-47187-7}


\bibitem[Lalire and Jorrand(2004)]%
        {LalireJorrand2004ProcessAlgebraicApproach}
\bibfield{author}{\bibinfo{person}{Marie Lalire} {and}
  \bibinfo{person}{Philippe Jorrand}.} \bibinfo{year}{2004}\natexlab{}.
\newblock \bibinfo{title}{A {{Process Algebraic Approach}} to {{Concurrent}}
  and {{Distributed Quantum Computation}}: {{Operational Semantics}}}.
\newblock
\showeprint[arxiv]{quant-ph/0407005}
\href{https://doi.org/10.48550/arXiv.quant-ph/0407005}{doi:\nolinkurl{10.48550/arXiv.quant-ph/0407005}}


\bibitem[Milner(1977)]%
        {Milner1977FullyAbstractModels}
\bibfield{author}{\bibinfo{person}{Robin Milner}.}
  \bibinfo{year}{1977}\natexlab{}.
\newblock \showarticletitle{Fully Abstract Models of Typed
  {$\lambda$}-Calculi}.
\newblock \bibinfo{journal}{\emph{Theoretical Computer Science}}
  \bibinfo{volume}{4}, \bibinfo{number}{1} (\bibinfo{date}{Feb.}
  \bibinfo{year}{1977}), \bibinfo{pages}{1--22}.
\newblock
\showISSN{0304-3975}
\href{https://doi.org/10.1016/0304-3975(77)90053-6}{doi:\nolinkurl{10.1016/0304-3975(77)90053-6}}


\bibitem[Milner and Sangiorgi(1992)]%
        {MilnerSangiorgi1992BarbedBisimulation}
\bibfield{author}{\bibinfo{person}{Robin Milner} {and} \bibinfo{person}{Davide
  Sangiorgi}.} \bibinfo{year}{1992}\natexlab{}.
\newblock \showarticletitle{Barbed Bisimulation}. In
  \bibinfo{booktitle}{\emph{Automata, {{Languages}} and {{Programming}}}},
  \bibfield{editor}{\bibinfo{person}{W.~Kuich}} (Ed.).
  \bibinfo{publisher}{Springer}, \bibinfo{address}{Berlin, Heidelberg},
  \bibinfo{pages}{685--695}.
\newblock
\showISBNx{978-3-540-47278-0}
\href{https://doi.org/10.1007/3-540-55719-9_114}{doi:\nolinkurl{10.1007/3-540-55719-9_114}}


\bibitem[Plotkin(1977)]%
        {Plotkin1977LCFConsideredProgramming}
\bibfield{author}{\bibinfo{person}{G.~D. Plotkin}.}
  \bibinfo{year}{1977}\natexlab{}.
\newblock \showarticletitle{{{LCF}} Considered as a Programming Language}.
\newblock \bibinfo{journal}{\emph{Theoretical Computer Science}}
  \bibinfo{volume}{5}, \bibinfo{number}{3} (\bibinfo{date}{Dec.}
  \bibinfo{year}{1977}), \bibinfo{pages}{223--255}.
\newblock
\showISSN{0304-3975}
\href{https://doi.org/10.1016/0304-3975(77)90044-5}{doi:\nolinkurl{10.1016/0304-3975(77)90044-5}}


\bibitem[Qin et~al\mbox{.}(2020)]%
        {QinEtAl2020VerifyingQuantumCommunication}
\bibfield{author}{\bibinfo{person}{Xudong Qin}, \bibinfo{person}{Yuxin Deng},
  {and} \bibinfo{person}{Wenjie Du}.} \bibinfo{year}{2020}\natexlab{}.
\newblock \showarticletitle{Verifying {{Quantum Communication Protocols}} with
  {{Ground Bisimulation}}}.
\newblock \bibinfo{journal}{\emph{Tools and Algorithms for the Construction and
  Analysis of Systems}}  \bibinfo{volume}{12079} (\bibinfo{date}{March}
  \bibinfo{year}{2020}), \bibinfo{pages}{21--38}.
\newblock
\href{https://doi.org/10.1007/978-3-030-45237-7_2}{doi:\nolinkurl{10.1007/978-3-030-45237-7_2}}


\bibitem[{Raymond-Robichaud}(2017)]%
        {Raymond-Robichaud2017LequivalenceEntreLocalrealisme}
\bibfield{author}{\bibinfo{person}{Paul {Raymond-Robichaud}}.}
  \bibinfo{year}{2017}\natexlab{}.
\newblock \showarticletitle{L'\'equivalence Entre Le Local-R\'ealisme et Le
  Principe de Non-Signalement}.
\newblock  (\bibinfo{date}{Aug.} \bibinfo{year}{2017}).
\newblock
\showeprint[hdl]{1866/20497}
\href{https://doi.org/10.71781/10506}{doi:\nolinkurl{10.71781/10506}}


\bibitem[Timpson(2005)]%
        {Timpson2005NonlocalityInformationFlow}
\bibfield{author}{\bibinfo{person}{C.~G. Timpson}.}
  \bibinfo{year}{2005}\natexlab{}.
\newblock \showarticletitle{Nonlocality and {{Information Flow}}: {{The
  Approach}} of {{Deutsch}} and {{Hayden}}}.
\newblock \bibinfo{journal}{\emph{Foundations of Physics}}
  \bibinfo{volume}{35}, \bibinfo{number}{2} (\bibinfo{date}{Feb.}
  \bibinfo{year}{2005}), \bibinfo{pages}{313--343}.
\newblock
\showISSN{1572-9516}
\href{https://doi.org/10.1007/s10701-004-1946-1}{doi:\nolinkurl{10.1007/s10701-004-1946-1}}


\bibitem[Wallace and Timpson(2007)]%
        {WallaceTimpson2007NonlocalityGaugeFreedom}
\bibfield{author}{\bibinfo{person}{David Wallace} {and}
  \bibinfo{person}{Christopher~G. Timpson}.} \bibinfo{year}{2007}\natexlab{}.
\newblock \showarticletitle{Non-Locality and {{Gauge Freedom}} in {{Deutsch}}
  and {{Hayden}}'s {{Formulation}} of {{Quantum Mechanics}}}.
\newblock \bibinfo{journal}{\emph{Foundations of Physics}}
  \bibinfo{volume}{37}, \bibinfo{number}{6} (\bibinfo{date}{June}
  \bibinfo{year}{2007}), \bibinfo{pages}{951--955}.
\newblock
\showISSN{1572-9516}
\href{https://doi.org/10.1007/s10701-007-9135-7}{doi:\nolinkurl{10.1007/s10701-007-9135-7}}


\bibitem[Wu et~al\mbox{.}(2024)]%
        {WuEtAl2024BranchingBisimulationSemantics}
\bibfield{author}{\bibinfo{person}{Hao Wu}, \bibinfo{person}{Qizhe Yang}, {and}
  \bibinfo{person}{Huan Long}.} \bibinfo{year}{2024}\natexlab{}.
\newblock \showarticletitle{Branching Bisimulation Semantics for Quantum
  Processes}.
\newblock \bibinfo{journal}{\emph{Inform. Process. Lett.}}
  \bibinfo{volume}{186} (\bibinfo{date}{Aug.} \bibinfo{year}{2024}),
  \bibinfo{pages}{106492}.
\newblock
\showISSN{0020-0190}
\href{https://doi.org/10.1016/j.ipl.2024.106492}{doi:\nolinkurl{10.1016/j.ipl.2024.106492}}


\bibitem[Yasuda et~al\mbox{.}(2014)]%
        {YasudaEtAl2014ObservationalEquivalenceUsing}
\bibfield{author}{\bibinfo{person}{Kazuya Yasuda}, \bibinfo{person}{Takahiro
  Kubota}, {and} \bibinfo{person}{Yoshihiko Kakutani}.}
  \bibinfo{year}{2014}\natexlab{}.
\newblock \showarticletitle{Observational {{Equivalence Using Schedulers}} for
  {{Quantum Processes}}}.
\newblock \bibinfo{journal}{\emph{Electronic Proceedings in Theoretical
  Computer Science}}  \bibinfo{volume}{172} (\bibinfo{date}{Dec.}
  \bibinfo{year}{2014}), \bibinfo{pages}{191--203}.
\newblock
\showISSN{2075-2180}
\showeprint[arxiv]{1412.8546}~[cs.LO]
\href{https://doi.org/10.4204/EPTCS.172.13}{doi:\nolinkurl{10.4204/EPTCS.172.13}}


\end{thebibliography}
\end{document}